\documentclass[a4paper,USenglish,cleveref,autoref, thm-restate]{lipics-v2021}
\pdfoutput=1

\usepackage{color}
\usepackage{cite}
\usepackage{amsthm}
\usepackage{amsmath}
\usepackage{amssymb}
\usepackage{mdframed}
\usepackage{braket}
\usepackage{graphicx}
\usepackage{mathtools}
\usepackage{kbordermatrix}
\usepackage{dsfont}
\usepackage[ruled,vlined]{algorithm}
\usepackage{float}
\usepackage{tablefootnote} 
\makeatletter 
\AfterEndEnvironment{mdframed}{%
 \tfn@tablefootnoteprintout%
 \gdef\tfn@fnt{0}%
}
\makeatother 

\DeclareMathOperator{\Tr}{Tr}

\usepackage{xcolor}
\definecolor{darkred}  {rgb}{0.5,0,0}
\definecolor{darkblue} {rgb}{0,0,0.5}
\definecolor{darkgreen}{rgb}{0,0.5,0}
\hypersetup{
  urlcolor   = blue,         
  linkcolor  = darkblue,     
  citecolor  = darkgreen,    
  filecolor  = darkred       
}


\newtheorem{problem}[theorem]{Problem}

\newcommand{\be}{\begin{eqnarray}}
\newcommand{\ee}{\end{eqnarray}}

\newcommand{\semigeq}{\succeq}




\newcommand{\I}{\mathbb{I}}
\newcommand{\X}{X}
\newcommand{\Y}{Y}
\newcommand{\Z}{Z}



\newcommand{\knote}[1]{\textcolor{red}{({\bf Kevin:} #1)}}
\newcommand{\onote}[1]{\textcolor{blue}{({\bf Ojas:} #1)}}

\renewcommand{\knote}[1]{}
\renewcommand{\onote}[1]{}
\nolinenumbers

\title{Application of the Level-$2$ Quantum Lasserre Hierarchy in Quantum Approximation Algorithms}
\titlerunning{Quantum Approximation from Level-$2$ Lasserre}

\author{Ojas Parekh}
{Sandia National Laboratories, Albuquerque, NM, USA\\email:  odparek@sandia.gov}
{}{}{}

\author{Kevin Thompson}
{Sandia National Laboratories, Albuquerque, NM, USA\\email:  kevthom@sandia.gov}
{}{}{}


\authorrunning{O.\, Parekh and K.\, Thompson} 

\ccsdesc{Theory of computation~Approximation algorithms analysis}
\ccsdesc{Theory of computation~Semidefinite programming}
\ccsdesc{Theory of computation~Quantum complexity theory}

\keywords{Quantum Max Cut, Quantum Approximation Algorithms, Lasserre Hierarchy, Local Hamiltonian, Heisenberg model} 


\acknowledgements{}

\relatedversion{} 

\supplement{}

\funding{Sandia National Laboratories is a multimission laboratory managed and operated by National Technology and Engineering Solutions of Sandia, LLC., a wholly owned subsidiary of Honeywell International, Inc., for the U.S. Department of Energy’s National Nuclear Security Administration under contract DE-NA-0003525.  This work was supported by the U.S. Department of Energy, Office of Science, Office of Advanced Scientific Computing Research, Accelerated Research in Quantum Computing and Quantum Algorithms Teams programs.}


\hideLIPIcs  

\EventEditors{}
\EventNoEds{2}
\EventLongTitle{International Colloquium on Automata, Languages and Programming (ICALP 2021)}
\EventShortTitle{ICALP 2021}
\EventAcronym{ICALP}
\EventYear{2021}
\EventDate{July 12--16, 2021}
\EventLocation{Glasgow, Scotland}
\EventLogo{}
\SeriesVolume{}
\ArticleNo{}

\begin{document}

\date{}
\maketitle

\begin{abstract}
    The Lasserre Hierarchy, \cite{L01, La01}, is a set of semidefinite programs which yield increasingly tight bounds on optimal solutions to many NP-hard optimization problems.  The hierarchy is parameterized by levels, with a higher level corresponding to a more accurate relaxation.  High level programs have proven to be invaluable components of approximation algorithms for many NP-hard optimization problems \cite{C08, B12, R12}.  There is a natural analogous quantum hierarchy\cite{D08, P10, B16_v2}, which is also parameterized by level and provides a relaxation of many (QMA-hard) quantum problems of interest \cite{B16_v2, B19, G19}.  In contrast to the classical case, however, there is only one approximation algorithm which makes use of higher levels of the hierarchy \cite{B16_v2}.  Here we provide the first ever use of the level-$2$ hierarchy in an approximation algorithm for a particular QMA-complete problem, so-called Quantum Max Cut \cite{G19, A20}.  We obtain modest improvements on state-of-the-art approximation factors for this problem, as well as demonstrate that the level-$2$ hierarchy satisfies many physically-motivated constraints that the level-$1$ does not satisfy.  Indeed, this observation is at the heart of our analysis and indicates that higher levels of the quantum Lasserre Hierarchy may be very useful tools in the design of approximation algorithms for QMA-complete problems.
    
\end{abstract}

\knote{Double check weights are $\geq 0$ wherever used}

\knote{Is it confusing to have $v_e$ for edge value when a very negative edge value correspond to a ``large'' edge?  I suggest swapping $x_e$ and $v_e$}
\onote{I'm fine either way, but I do see your point and am cool if you want to swap them -- we just have to do a careful readover afterwards.}

\knote{Need comment somehwere addressing why we put max mixed states on the vertices not included in the matching.  }

\section{Introduction}

The study of many body quantum systems, and their corresponding spectra is of utmost importance in many sub-fields of physics \cite{Bo12}.  These systems generally have an exponentially large dimension, so a direct calculation is intractable.  Indeed, determining the highest or lowest energy of a quantum state is the canonical QMA-hard problem \cite{Bo12, K02}, so we should not expect to solve the problem even with access to a quantum computer.  Hence, the study of algorithms which produce approximate solutions emerges as an interesting direction of study.  These problems are made even more interesting by the fact that, in contrast to the classical case \cite{V13, W11}, there are relatively few known rigorous approximation algorithms known.

\textbf{Prior work.}  The $2$-Local Hamiltonian problem has been a cornerstone of quantum complexity theory; however, it has been recently studied in the context of approximation algorithms \cite{B16_v2, B19, G19, H20, A20, P20, AGM21}.  Many of these algorithms draw inspiration from the seminal Goemans-Williamson Max Cut approximation algorithm \cite{G95} or other appropriate classical counterparts~\cite{R12}.  For classical approximation algorithms, an effective meta-algorithm is to solve a linear or semidefinite program (SDP) which relaxes the (NP-hard) optimization problem, followed by a rounding procedure which seeks to turn the optimal SDP variable into a solution in the appropriate domain (binary, integral, etc.).  The SDP provides a polynomial-time-computable bound on the optimization problem hence bounding the loss in objective allows one to bound the ratio of the objective obtained to the optimal solution (this quantity is called the approximation factor).  In the quantum case, the SDP variable is polynomial size, and the goal is to produce a (classical description) of an exponentially large quantum state, again with quantifiable loss.  Most such results use the same quantum generalization \cite{B16_v2, B19, G19} of a semidefinite programming hierarchy discovered independently by several authors in the classical case~\cite{L01, P00, G01}.  Variations in the aforementioned results~\cite{B19, G19, H20, P20} derive from differences in either the SDP used to relax the problem~\cite{H20}, changing the rounding algorithm~\cite{P20, H20, B19, G19}, or in some cases by slightly modifying the approximation algorithm and providing a better analysis for the formal proof of the approximation factor~\cite{P20}.

With only one exception, \cite{A20}, these results all have a rounding step which produces a product state.  Since there are upper bounds on the performance of product states \cite{G19}, these results all have necessarily limited performance, and it is desireable to produce non-product states for a better objective.  Another common thread in many of these works is the use of the level-$1$ instance of the quantum Lasserre Hierarchy.  As we will demonstrate in \Cref{sec:l2_is_phys}, this is a relatively loose relaxation which does not satisfy important physical constraints that a consistent quantum state would satisfy.  Hence, to get a better objective it is important to use a higher level of Lasserre, for a tighter bound on the optimal quantum state.  

There are two works of particular interest in the current context: \cite{A20} and \cite{B16_v2}, which we comment on.  We will first need to formally describe a specific $2$-Local Hamiltonian problem, introduced as a quantum analog of Max Cut~\cite{G19}.  Note that here and throughout the paper, we will use the notation $\sigma_i$ to mean the $2\times 2$ matrix  $\sigma$ acting on $i$ tensored with the $\mathbb{I}\in \mathbb{C}^{2\times 2}$ acting on each of the other qubits (the total number of qubits, $n$, will be clear from context when this notation is used).  The formal definition of Quantum Max Cut, $QMC(G, w)$ is:  
\begin{definition}[ $QMC(G, w)$ ]\label{def:qmc}
Given a graph $G=(V, E)$ with $|V|=n$, let $H\in \mathbb{C}^{2^n \times 2^n}$ such that:
\begin{align*}
H=\sum_{ij \in E} w_{ij}(\mathbb{I}- X_i X_j-Y_i Y_j-Z_i Z_j)
\end{align*}
Then, we define $QMC(G, w)$ to be the largest eigenvalue of $H$. Ideally, one also seeks to produce a (description) of a state achieving this value.
\end{definition}
\noindent Gharibian and Parekh~\cite{G19} introduced this problem as a maximization version of the well-known problem of finding ground states for the quantum Heisenberg model.  They give a classical $0.498$-approximation using product states, where a $\frac{1}{2}$-approximation is the best possible in the product state regime.  Anshu, Gosset, and Morenz~\cite{A20} present a classical rounding algorithm that outputs a description of an entangled state and are able to deliver a $0.531$-approximation.  To the best of our knowledge, this is the first approximation algorithm for a $2$-Local Hamiltonian problem to move beyond product states.  Likewise, the analysis in \cite{A20} differs from the analysis in the other related works.  Instead of using SDPs to upper bound the optimal quantum objective, \cite{A20} uses physical considerations for the particular kind of Hamiltonian they study \cite{L62}.  The key technical component is an upper bound on $QMC(G, w)$ where $G$ is a star graph. The rounding algorithm is also fundamentally different in that the output quantum state is produced from direct consideration of the Hamiltonian and its weights, rather than a solution to an SDP.  

Another important work for understanding our contribution is that of Brand${\rm\tilde{a}}$o and Harrow \cite{B16_v2}, since this paper makes use of higher levels of the quantum Lasserre Hierarchy.  Essentially the relevant rounding algorithm from this paper proceeds in the same way as the classical counterpart by Barak, Raghavendra, and Steurer~\cite{R12}, where a set of subsystems is sampled and all other density matrices are sampled according to single qubit density matrices conditioned on this set.  There are additional issues that arise in the analysis from the quantum-ness of the problem, but the rounding algorithm is semantically similar.  Additionally, all of the results presented in \cite{B16_v2} make strong non-local assumptions on the particular ``topology'' or structure of the instance.   

\textbf{Our contributions.}  In contrast to previous approaches, we make only local assumptions on the $2$-Local terms, and apply the second level of the Lasserre Hierarchy in a radical new way which makes crucial use of ``monogamy of entanglement'' inequalities.  Indeed, we believe that the methods we introduce constitute the most interesting contribution of this work.  

We bridge the gap between~\cite{A20} and more traditional SDP-based approximation algorithms by showing that the monogamy of entanglement bound derived in~\cite{A20}, based on a seminal result of Lieb and Mattis\cite{L62}, is a consequence of the second level of a quantum analog~\cite{D08, P10, B16_v2} of the classical Lasserre Hierarchy~\cite{L01, La01}.  To the best of our knowledge this is a first explicit example of such a connection.  This establishes the second level of the quantum Lasserre Hierarchy as the source of the best upper bound for Quantum Max Cut that is amenable to analysis. We show that weaker versions of this SDP relaxation, including the first level, fail to yield the monogamy of entanglement bound.
In addition we slightly improve upon the best-known approximation factor for $QMC$~\cite{A20} through a simple rounding algorithm that uses an SDP solution to guide construction of an entangled solution.  This is a significant departure from existing approximation algorithms for $2$-Local Hamiltonian problems, requiring new connections between quantum SDP relaxations and the convex hull of matchings in a graph.  Quantum Max Cut has emerged as a vehicle for advancement of approximation algorithms for $2$-Local Hamiltonian problems, since it maintains the hardness and essence of more general problems while hiding technical details that hinder progress~\cite{G19,A20,P20}.  We expect that the insights we develop here for Quantum Max Cut may be generalized for other problems.          


\textbf{Our methods.}  As stated previously, our rounding algorithm begins by formulating and solving an appropriate SDP, which comes from the quantum generalization of the Lasserre Hierarchy.  The SDP assigns a ``value'' for each edge, roughly corresponding to ``how close'' the parameters of the edge are to a singlet.  An edge with {\it large} value has parameters nearly matching the singlet.  Loosely speaking, if an edge has large value then the SDP ``thinks'' an optimal quantum solution is nearly a singlet along the edge.  The rounding algorithm proceeds by picking a threshold and adding every edge with value over the threshold to the large edge set (denoted $L$ in the paper).  In a legitimate quantum state, the concept of monogamy of entanglement implies that we cannot have too many large edges attached to the same vertex.  Since the SDP relaxation we use is relatively strong (\Cref{sec:l2_is_phys}), this implies the graph induced by the small edges must have low degree.  Hence, if we find a maximum matching on this graph, and place a singlet (the state in \Cref{eq:alg_output}) on each edge in the matching, we obtain a quantum state with performance approximately comparable to the SDP on this subgraph.  For the remainder of the qubits we place the maximally mixed state.  

Intuitively, this technique of thresholding the edges and then finding a matching has poor performance when all the edges have small values.  However, in this case a product state gives a good approximation to the objective: if all the edges are small then the state does not align well with the singlet along the edges in the Hamiltonian, hence entanglement is not really needed to emulate the state.  The rounding algorithm checks the value of both of these strategies (singlets on large edges vs. product state rounding) and takes whichever is better.  

\textbf{Future work.} Our analysis is not optimal, and it is possible to obtain improvements.  For example, we may consider stronger valid inequalities for Quantum Max Cut solutions arising from our relaxation, which we know exist through numerical experiments.  How far can such an improvement be pushed?  Can we significantly improve the approximation ratio for Quantum Max Cut beyond $\approx 0.53$?  Our analysis shows that the second level of the quantum Lasserre Hierarchy is exact for star graphs.  Can similar results be achieved for more interesting classes of graphs? 

Another important direction is the search for upper bounds on achievable approximation factors.  For classical optimization problems there are many such bounds known \cite{Kh07, Kh05}.  Most of these rely on a complexity theoretic conjecture referred to as the Unique Games Conjecture (UGC) \cite{Kh02}, i.e.\ if UGC holds then we have the corresponding upper bound on the approximation factor.  No analogous results are known for quantum optimization problems.

\section{Preliminaries}
We use standard quantum information and graph theory notation, highlighting a few specific definitions below.  

For an integer $l \geq 1$, we let $[l] := \{1,\ldots,l\}$.  For a set $S$, $\mathbb{R}^S$ refers to $\mathbb{R}^{|S|}$, where the dimensions of the Euclidean space are associated with the elements of $S$.  We generally refer to the elements of a vector $x \in \mathbb{R}^S$ as $x_l$ for $l \in S$; however, we will also refer to $x_l$ as variables comprising a solution $x$ in the context of semidefinite and linear programs. 

\textbf{Quantum information.}
The Pauli matrices take their usual definition:
\begin{equation*}
\label{eq:paulis}
 \mathbb{I}=\begin{bmatrix}
1 & 0 \\
0 & 1
\end{bmatrix},
\,\,\,\,\,\,
\X=\begin{bmatrix}
0 & 1 \\
1 & 0
\end{bmatrix},
\,\,\,\,\,\,
\Y=\begin{bmatrix}
0 & -i \\
i & 0
\end{bmatrix}, \,\text{and}
\,\,\,\,\,\,
\Z=\begin{bmatrix}
1 & 0 \\
0 & -1
\end{bmatrix}.
\end{equation*}
\noindent We follow the standard practice of using subscripts to indicate quantum subsystems among $n$ qubits, and we use the notation $\sigma_i$ to denote a Pauli matrix $\sigma \in \{X,Y,Z\}$ acting on qubit $i$, i.e.\ $\sigma_i := \mathbb{I} \otimes \mathbb{I} \otimes \ldots \otimes \sigma \otimes \ldots \otimes \mathbb{I} \in \mathbb{C}^{2^n \times 2^n}$, where the $\sigma$ occurs at position $i$. The sets $\mathcal{S}(\mathcal{X})$ and $\mathcal{H}(\mathcal{X})$ refer to the symmetric and Hermitian matrices, respectively, acting on the (complex) Euclidean space $\mathcal{X}$.


\textbf{Graph theory.} We deal with only finite and simple graphs $G=(V,E)$, with vertex set $V$ and edge set $E$.  The notation $E(G)$ is the edge set of a graph $G$. We will refer to an edge $e$ with endpoints $i,j \in V$ as $ij \in E$, or simply as $e \in E$ when endpoints are immaterial.  We generally consider weighted graphs where a weight $w_e \geq 0$ is specified for each edge $e \in E$. 

For a graph $G=(V,E)$, and a set of vertices $S \subseteq V$, we denote the \emph{induced subgraph} on $S$, consisting of all edges in $E$ with both endpoints in $S$, as $G[S]$.  For a set of vertices and edges $S \subseteq V$ and $F \subseteq E$, respectively, the edge set $\delta_F(S)$ is defined as $\{ij \in F \mid |\{i,j\} \cap S| = 1\}$, and $E_F(S) := \{ij \in F \mid |\{i,j\} \cap S| = 2\}$.  We drop the subscript $F$ when $F=E$, and for a vertex $i \in V$, we abbreviate $\delta_F(\{i\})$ as $\delta_F(i)$. 

A graph is \emph{$k$-vertex connected} if it has at least $k$ vertices and deleting any set of fewer than $k$ vertices (and any incident edges) leaves a connected graph. A \emph{matching} $M$ is a set of edges such that no two distinct $e,f \in M$ share a common vertex. A \emph{perfect matching} in $G$ is a matching of size $\frac{|V|}{2}$.

\subsection{Approximation Algorithm Overview}
The formal rounding algorithm we propose is presented in Algorithm \ref{alg:main}.
\begin{figure*}
\begin{mdframed}
\begin{algorithm}[H]
\begin{enumerate}
    \item  Given as input a graph $G=(V, E)$ with weights $w=\{w_e \geq 0\}_{e \in E}$, solve Lasserre$_2(G,w)$ (\Cref{def:lasserre_k}).  Let the matrix $M$ be an optimal solution.  
    
    \item  \label{step:values}For each $ij\in E$ calculate $v_{ij}:=[M(X_i X_j, \mathbb{I})+M(Y_i Y_j, \mathbb{I})+M(Z_i Z_j, \mathbb{I})]/3$, where $M(\Gamma,\Phi)$ refers to the $(\Gamma,\Phi)$ entry of the matrix $M$.  Set $x_{ij}:=-v_{ij}$.
    
    \item \label{step:thresh} Pick an integer $d \geq 1$, and define $L := \{e \in E \mid x_e > \alpha(d) := \frac{d+3}{3(d+1)}\}$. Find a maximum-weight matching $F$ in the graph $G_L := (V,L)$ with respect to the weights $\{w_e\}_{e \in L}$.  Let $U$ be the vertices unmatched by $F$.
    
    \item  Define a quantum state:\footnotemark 
    \vspace{-8 pt}
    \begin{equation}\label{eq:alg_output}
        \rho_F:=\prod_{ij\in F} \left( \frac{\mathbb{I}-\X_i \X_j-\Y_i \Y_j-\Z_i \Z_j}{4}\right) \prod_{v\in U} \frac{\mathbb{I}_v}{2}.
    \end{equation}
    \vspace{-8 pt}
    \item \label{step:prod-state} Execute the randomized approximation algorithm for Quantum Max Cut from~\cite{G19}, yielding a product state $\rho_{PS}$ from a Lasserre$_1$ solution.
    
    \item  Output the better of $\rho_F$ and $\rho_{PS}$.  
\end{enumerate}
\caption{\label{alg:rounding} Approximation Algorithm for Quantum Max Cut}\label{alg:main}
\end{algorithm}
\end{mdframed}
\end{figure*}\footnotetext{Recall $X_i$ is a tensor product of identity operators and a single $X$ operator in the $i$th position.  So, $((\mathbb{I}+X)/2)\otimes ((\mathbb{I}+X)/2)$ is expressed as $\prod_{i=1}^2 (\mathbb{I}+X_i)/2$ rather than $\bigotimes_{i=1}^2 (\mathbb{I}+X_i)/2$}
To understand the significance of the parameter $d$ in Step~\ref{step:thresh}, 
 recall that we find a set of ``large'' edges $L$ based on a threshold.  The strength of Lasserre$_2$ implies that $G_L$ has bounded degree.  $d$ is the degree upper bound we prove (\Cref{lem:degree-bound}) corresponding to threshold $\alpha(d)=(d+3)/(3(d+1))$.  In particular, if $d=1$  then no vertex has two adjacent edges and we may select all edges in $L$ for our matching.  The problem with this strategy, however, is that if all the edges have small values then the product state rounding algorithm (Step~\ref{step:prod-state}) has relatively poor performance.  Hence, we obtain the result for $d=2$.  This allows us to get better performance for product state rounding but requires more work to show a maximum matching has good performance with respect to the SDP.  \knote{Rewrote this paragraph.  It's repeating some stuff from the intro, but the ref wanted more exposition here}

\textbf{Analysis outline.}  The main theorem of this work (\Cref{thm:main}) proves the stated approximation factor of \Cref{alg:rounding}.  The proof of this theorem requires first demonstrating (in \Cref{sec:l2_is_phys}) several inequalities on the optimal solution of the second level of the quantum Lasserre Hierarhcy (demonted Lasserre$_2$).  Roughly there are two sets of techniques we use to prove the inequalities we need.  The first set (\Cref{sec:rel_stren_relax}) involves using invariance of the the objective function under certain permutations of the SDP variable and Schur complements.  The second set of bounds follows from sum-of-squares proof techniques \Cref{sec:SOS}.  

Understanding the performance of the thresholding (Step~\ref{step:thresh} in the algorithm) involves showing that constraints satisfied by the SDP (\Cref{sec:l2_is_phys}) imply that the ``large'' edges $L$ can be scaled by a not too small constant and brought into the convex hull of matchings (\Cref{thm:edmonds-matching}).  This provides a lower bound on the performance of the state $\rho_F$, then we may appeal to~\cite{G19} to lower bound the performance of $\rho_{PS}$.  We will prove the main theorem first, using components proved subsequently.  The reader is encouraged to come back to this proof after reading the document

\begin{theorem}[Main Theorem]\label{thm:main}
Let $G=(V, E)$ be a graph and $\{w_e\}_{e\in E}$ be a set of weights with $w_e \geq 0$ for all $e\in E$.  Let $H$ be the $QMC$ Hamiltonian in \Cref{def:qmc}, and let $\rho$ be the density matrix output by \Cref{alg:rounding}.  Then,
\begin{equation*}
    \frac{\mathbb{E}[\Tr (H\rho)]}{QMC(G, w)} \geq 0.533,
\end{equation*}
where the numerator is the expected objective value obtained by \Cref{alg:rounding}
\end{theorem}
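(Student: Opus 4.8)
The plan is to show $\mathbb{E}[\Tr(H\rho)] \ge 0.533\,c$ where $c := \mathrm{Lasserre}_2(G,w)$ is the value of the SDP from Step~1; since that SDP relaxes $QMC(G,w)$, this suffices. As $\rho$ is the better of the deterministic $\rho_F$ and the random $\rho_{PS}$, $\mathbb{E}[\Tr(H\rho)] \ge \max\{\Tr(H\rho_F),\,\mathbb{E}[\Tr(H\rho_{PS})]\}$, so I bound the two terms separately. It helps to normalise everything to lie in $[0,1]$ per edge: for a state $\tau$ set $\hat z_e(\tau) := \tfrac14(1-\langle X_iX_j+Y_iY_j+Z_iZ_j\rangle_\tau)\in[0,1]$, so $\Tr(H\tau)=4\sum_e w_e\hat z_e(\tau)$; similarly $c = 4\sum_e w_e z_e$ with $z_e := \tfrac14(1+3x_e)\in[0,1]$, and $L$ is exactly $\{e:z_e>2/3\}$. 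A singlet on $e$ has $\hat z_e=1$, the maximally mixed state has $\hat z_e=1/4$, and every product state has $\hat z_e\le1/2$. Write $c_L:=4\sum_{e\in L}w_ez_e$, $c_S:=c-c_L$, and $\lambda:=c_L/c$.

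For $\rho_F$: the reduced state on every edge outside the matching $F$ is maximally mixed, so $\Tr(H\rho_F)=3w(F)+w(E)$. To lower-bound $w(F)$ I invoke \Cref{lem:degree-bound}: the physicality of $\mathrm{Lasserre}_2$ established in \Cref{sec:l2_is_phys} forces $G_L$ to have maximum degree $d=2$, hence $G_L$ is a disjoint union of paths and cycles. By Edmonds' characterisation of the matching polytope (\Cref{thm:edmonds-matching}) the scaled indicator $\tfrac13\mathbf 1_L$ lies in the convex hull of matchings, so $w(F)\ge\tfrac13 w(L)$ and $\Tr(H\rho_F)\ge w(E)+w(L)$. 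The refinement, and the crux of the $\rho_F$ analysis, is to run this per connected component $C$ of $G_L$: the restriction of the optimal $\mathrm{Lasserre}_2$ moment matrix to $C$ is feasible for $\mathrm{Lasserre}_2(C)$, and since $\mathrm{Lasserre}_2$ is exact on stars (and tight enough on the short paths and cycles that make up $G_L$), the SDP cannot assign near-singlet value on every edge of an odd cycle — exactly the configuration that makes the matching $F\cap C$ lose weight. Trading the matching loss against this cap on $c_L$ component-by-component gives $\Tr(H\rho_F)\ge\theta\,c_L+\tfrac38 c_S$ for an absolute constant $\theta$, the factor $\tfrac38$ being forced by $z_e\le 2/3$ on small edges.

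For $\rho_{PS}$: I quote the guarantee behind the product-state algorithm of~\cite{G19}. Rounding the one- and two-local SDP moments produces a product state with $\mathbb{E}[\Tr(H\rho_{PS})]\ge\sum_e w_e\Psi(z_e)$ for an explicit profile $\Psi$ with $\Psi(z)\ge 0.498\cdot 4z$ for all $z$, but with a strictly better ratio once $z$ is bounded away from $1$ — in particular on the small edges, where $z_e\le 2/3$, so that $\mathbb{E}[\Tr(H\rho_{PS})]\ge b\,c_L+a\,c_S$ with $b\ge 0.498$ and $a$ substantially larger.

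Finally, the algorithm achieves at least $\max\{\theta\lambda+\tfrac38(1-\lambda),\;b\lambda+a(1-\lambda)\}\cdot c$; the first expression is increasing in $\lambda$ and the second decreasing, so the minimum over $\lambda\in[0,1]$ is attained at their crossover, and a short computation checks it exceeds $0.533$. This is precisely where the threshold $d=2$ (i.e.\ $\alpha(2)=5/9$, equivalently the $z$-threshold $2/3$) is calibrated: raising it strengthens the $\rho_F$ branch on large edges but shifts weight from $c_S$ toward $c_L$ and weakens the product-state branch, and lowering it does the reverse. The main obstacle is the refined component-wise bound for $\rho_F$, which must marry a purely combinatorial matching-polytope estimate for $w(F\cap C)$ with the quantum fact that $\mathrm{Lasserre}_2$ is tight on the small graphs arising as components of $G_L$ (so it cannot over-claim singlet value around odd cycles); and since the target only just clears the $0.531$ of~\cite{A20}, the estimates on both branches — together with the per-component/per-edge accounting in the final case analysis — must be pushed close to tight rather than replaced by crude worst-case constants.
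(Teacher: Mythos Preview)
Your high-level architecture matches the paper exactly: split the SDP value into large-edge and small-edge parts via the parameter $\lambda=c_L/c$ (the paper uses $s=1-\lambda$), derive linear lower bounds $\theta\lambda+\tfrac38(1-\lambda)$ for $\rho_F$ and $b\lambda+a(1-\lambda)$ for $\rho_{PS}$, and take the crossover of the two lines. The $\tfrac38$ on small edges and the product-state constants $b\approx0.498766$, $a\approx0.557931$ are correct.

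The gap is in the $\rho_F$ branch. Your only concrete step, putting $\tfrac13\mathbf{1}_L$ in the matching polytope, yields just $w(F)\ge\tfrac13 w(L)$ and hence (since $1+3x_e\le 4$ on $L$) a ratio of at most $\tfrac12$ on large edges. Plugging $\theta=\tfrac12$ into your min--max gives a crossover below $0.5$, so this is not enough, and you acknowledge as much. But the proposed ``refinement'' --- trading matching loss against a component-wise cap on $c_L$ using tightness of Lasserre$_2$ on the components of $G_L$ --- is not carried out, does not specify $\theta$, and is not what the paper actually does. The paper does \emph{not} work with the indicator $\mathbf{1}_L$; it works with the \emph{fractional} SDP values themselves and shows that $\tfrac{9}{14}x$ lies in the matching polytope of $G_L$. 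The degree constraints follow from the star bound $x_e+x_f\le\tfrac43$ for adjacent $e,f$ (so $\tfrac34 x$ already satisfies them); the factor-critical constraints are handled via the odd-cycle SOS inequalities $\sum_{e\in C}x_e\le |C|-2$ (\Cref{lem:odd-cycle-ineq}), which force an extra factor of $\tfrac67$ only for $|C|\ge 7$. Then, crucially, the paper bounds the per-edge ratio $\frac{1+3\beta x_e}{1+3x_e}$ using the fact that any non-isolated $e\in L$ has $x_e<\tfrac79$ (again from the star bound), which gives $\theta=\tfrac34$. Your component-wise sketch never touches the fractional $x_e$ and so misses both the mechanism and the constant; without $\theta\ge\tfrac34$ the final min--max does not reach $0.533$.
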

\begin{proof}
Let $d=2$, let $\{x_e\}_{e\in E}$ be the values obtained from the SDP as in \Cref{step:values}, let $L$ be the set of edges found in \Cref{step:thresh}, let $S:=E-L$, and let $\{y_e^*\}_{e\in E}$ be such that $\{0, 1\} \ni y_e^*=1$ if and only if edge $e$ is chosen in the matching for $\rho_F$  (see \Cref{lem:matching_bound}).

Define:
\begin{equation*}
    s:=\frac{\sum_{e \in S} w_e (1+3 x_e)}{\sum_{e \in S} w_e (1+3 x_e)+\sum_{e \in L} w_e(1+3 x_e)},
\end{equation*}
\noindent and note that $s\in [0, 1]$ since the comment below \Cref{lem:L2-edge-bound} implies that $(1+3 x_e) \geq 0$ for $e \in E$. It holds that 
\begin{equation*}
    \frac{\sum_{e \in S} w_e (1+3 y_e^*)+\sum_{e \in L} w_e(1+3 y_e^*)}{\sum_{e \in S} w_e (1+3 x_e)+\sum_{e \in L} w_e(1+3 x_e)}
    =\frac{\sum_{e \in S} w_e (1+3 y_e^*)}{\sum_{e \in S} w_e (1+3 x_e)} s
    + \frac{\sum_{e \in L} w_e (1+3 y_e^*)}{\sum_{e \in L} w_e (1+3 x_e)} (1-s)
\end{equation*}
\noindent Now we can apply \Cref{lem:matching_bound},
\begin{equation*}
    \frac{\sum_{e \in S} w_e (1+3 y_e^*)}{\sum_{e \in S} w_e (1+3 x_e)} s
    + \frac{\sum_{e \in L} w_e (1+3 y_e^*)}{\sum_{e \in L} w_e (1+3 x_e)} (1-s)\geq \frac{3}{8} s +\frac{3}{4}(1-s).
\end{equation*}
A similar argument for $\rho_{PS}$ using \Cref{lem:prod_bound} yields:
\begin{equation*}
    \frac{\mathbb{E}[\Tr(H \rho_{PS})]}{\sum_{e \in S} w_e (1+3 x_e)+\sum_{ij\in L} w_e(1+3 x_e)} \geq 0.557931 s + 0.498766 (1-s) 
\end{equation*}
A lower bound on the expected approximation factor is
\begin{equation*}
\min_{s \in [0,1]} \max \left\{\frac{3}{8} s +\frac{3}{4}(1-s),\ 0.557931 s + 0.498766 (1-s)\right\},
\end{equation*}
which is calculated by the linear program,
\begin{align*}
    0.533 \leq& \min r\\
    s.t.\ \  &\left\{\frac{3}{8} s +\frac{3}{4}(1-s) \leq r,\ 
    0.557931 s + 0.498766 (1-s) \leq r,\ 
    1 \geq s \geq 0\right\}.
\end{align*}

\end{proof}

\section{The Level-2 Quantum Lasserre Hierarchy}\label{sec:l2_is_phys}

\subsection{Definitions}

The classical or commutative Lasserre Hierarchy (and the dual Sum-of-Squares Hierarchy) is a set of semidefinite programs which relaxes the notion of a probability distribution to a pseudo-distribution~\cite{B12}.  A pseudo-distribution is an assignment of values to low order moments which respects some, but not all, of the properties that a fully consistent probability distribution would satisfy. To understand this consider $n$ binary random variables $(A_1, ..., A_n)$.  We will be interested in expectations of polynomials in the $A_i$.\knote{ deleted:``, which are multilinear, without loss of generality, since the $A_i$ are binary'', it is pulling focus} For each monomial of degree $t \leq 2k$ in these variables, the level-$k$ instance of the hierarchy assigns value: $v_k(A_{i_1} A_{i_2}\ldots A_{i_t}) \in [0, 1]$.  The value $v_k$ is meant to represent the expectation $\mathbb{E}_{\mathcal{D}}[A_{i_1} A_{i_2}\ldots A_{i_t}]$ for a valid probability distribution $\mathcal{D}$, but it is also possible that it assigns values in such a way that it is impossible to have $v_k(A_{i_1} A_{i_2}\ldots A_{i_t})=\mathbb{E}_{\mathcal{D}}[A_{i_1} A_{i_2}\ldots A_{i_t}]$ for any valid distribution $\mathcal{D}$.  The level-$k$ SDP assigns values so that polynomials of degree at most $k$ behave as they should for a valid distribution.  In particular the SDP assigns values to monomials in such a way that if one expanded $p(A_1,\ldots, A_n)^2$ as a linear combination of monomials and applied $v_k$ to the individual terms, the resulting value $v_k(p(A_1,\ldots, A_n)^2)\geq 0$. Note that the expected behavior for random variables is the same: $\mathbb{E}_{\mathcal{D}}[p(A_1,\ldots, A_n)^2]\geq 0$ for a distribution $\mathcal{D}$.  The level $k$ can be thought of as checking that the distribution looks valid from the perspective of low order polynomials.

A quantum analog of the Lasserre Hierarchy~\cite{D08, P10, B16_v2} is essentially the same except that it is checking the validity of low order polynomials in the Pauli matrices with respect to an overall quantum distribution (density matrix).  The values we will assign are meant to represent values of $\Tr(\Gamma \rho)$ for $\Gamma$ a ``low-order'' tensor product of Pauli matrices and $\rho $ a valid density matrix.  However, the relaxation will likely assign values $v(\Gamma)$ in such a way that it is impossible for $v(\Gamma)=\Tr(\Gamma \rho)$ to hold for any density matrix (and for all $\Gamma$)\footnote{Indeed, if we were able to constrain the low order statistics to be {\it globally consistent} with some (physical) density matrix, then we could find the largest eigenvalue and solve a QMA-complete problem\cite{L06}.  }.  In this context, by ``low-order monomial'' we mean the following:
\begin{definition}[$\mathcal{P}_n(k)$]
Given $k$, $n$ define $\mathcal{P}_n(k)$ as the set of Pauli operators of weight $\leq k$.  Formally, $\Gamma\in \mathcal{P}_n(k)$ if $\Gamma$ is a tensor product of $n$ operators, each of which is in $\{\I, \X, \Y, \Z\}$ such that at most $k$ are not $\I$.
\end{definition}

Lasserre$_k$ will assign values to monomials (elements of $\mathcal{P}_n(2k)$) in such a way that if $p=\sum_{\Phi \in \mathcal{P}_n(k)} c_\Phi \Phi$, then $v(p^2)=\sum_{\Phi, \Phi'} c_\Phi c_{\Phi'} v(\Phi \Phi')\geq 0$.  A value assignment which respects low order statistics is equivalent to a positive-semidefinite (PSD) constraint on a ``moment matrix''.  To understand this imagine we had a PSD matrix $M$ with rows and columns indexed by elements of $\mathcal{P}_n(k)$, and we assigned values so that $v(\Gamma):=M(\Phi, \Psi)$ if $\Phi \Psi=\Gamma$.  Then, given some polynomial $p$, 
\begin{equation*}
    v(p^2)=v\left( \left(\sum_{\Phi\in \mathcal{P}_n(k)}  c_{\Phi} \Phi\right)^2\right)=\sum_{\substack{\Phi, \Phi'\\ \in \mathcal{P}_n(k)}} c_\Phi c_{\Phi'} v(\Phi\Phi')= \sum_{\substack{\Phi, \Phi'\\ \in \mathcal{P}_n(k)}} c_\Phi c_{\Phi'} M(\Phi, \Phi')=c^T M c,
\end{equation*}
where $c \in \mathbb{R}^{\mathcal{P}_n(k)}$ is the vector of monomial coefficients.  Since $M$  was assumed PSD we are guaranteed that the RHS is $\geq 0$.  Indeed, if we assign values based on a PSD matrix subject to appropriate constraints, we are guaranteed that Lasserre$_k$ will respect low degree polynomials:  
\begin{definition}[Lasserre$_k(G, w$)]\label{def:lasserre_k}
Given $k$, a graph $G=(V, E)$ on $n$ vertices, as well as a vector of non-negative weights $\{w_e\}_{e\in E}$ let $\gamma:=|\mathcal{P}_n(k)|$.  For each $ij\in E$, define $C_{ij}\in \mathcal{S}(\mathbb{R}^{\gamma \times \gamma})$ where rows and columns are indexed by elements of $\mathcal{P}_n(k)$ such that
\begin{alignat*}{2}
    C_{ij}(\sigma_i, \sigma_j) &:=-\frac{1}{2} && \quad \text{for $\sigma\in \{\X, \Y, \Z\}$},\\
    C_{ij}&:=0 && \quad \text{otherwise}.
\end{alignat*}
Let $M\in \mathcal{S}(\mathbb{R}^{\gamma \times \gamma})$ be an SDP variable with rows and columns indexed by elements of $\mathcal{P}_n(k)$.  We define Lasserre$_k(G, w)$ as the following SDP:
\begin{alignat}{2}
    \max \sum_{ij\in E} & \mathrlap{w_{ij} \left(1+\Tr(C_{ij} M) \right)}\\
    s.t. \qquad 
    \label{eq:diag_constraint}M(\Gamma, \Gamma)&=1 \quad && \forall\, \Gamma \in \mathcal{P}_n(k),\\
    \label{eq:hermitian_constraint} M(\Gamma, \Phi)&=0 \quad && \forall \, \Gamma, \Phi \in \mathcal{P}_n(k) \text{  s.t. $\Gamma \Phi$ is not Hermitian}, \\
    \label{eq:composition_constraint}M(\Gamma, \Phi)&=M(\Gamma', \Phi') \quad && \forall \, \Gamma, \Phi, \Gamma',\Phi' \in \mathcal{P}_n(k)\text{\,\,s.t. $\Gamma \Phi=\Gamma' \Phi'$},\\
    \label{eq:neg_composition_constraint}M(\Gamma, \Phi)&=-M(\Gamma', \Phi') \quad && \forall \, \Gamma, \Phi, \Gamma',\Phi' \in \mathcal{P}_n(k) \text{ s.t. $\Gamma \Phi=-\Gamma' \Phi'$}, \\
    M &\semigeq 0,\\
    M &\in \mathcal{S}(\mathbb{R}^{\gamma \times \gamma}).
\end{alignat}
We will denote Lasserre$_k(G)$ as the above problem with uniform weights (set all $w_{ij}=1$).  Note that we employ a real version of the Lasserre Hierarchy rather than the usual complex version.  This still provides an upper bound on the optimal quantum state as shown below in \Cref{thm:is_relaxation}.
\end{definition}
\noindent Since in Lasserre$_2$ we have constraints $M(\sigma_i, \sigma_j)=M(\sigma_i \sigma_j, \mathbb{I})$, we could have equivalently defined the objective matrix using these moment matrix entries, i.e.\ taking $C_{ij}(\sigma_i \sigma_j, \mathbb{I})\neq 0$.  

\begin{definition}[Lasserre$_k$ Edge Values]\label{def:lasserre-edge-values}
From a solution $M$ to Lasserre$_k$(G,w) (\Cref{def:lasserre_k}), we define edge values that are used by the rounding algorithm, \Cref{alg:rounding}.  Such values are defined for every pair of distinct vertices $i,j$, hence we assume, when referring to these values, that $E$ is edge set of a complete graph, denoted $K_n$.  We may set $w_e = 0$ for edges $e \in E$ that do not contribute to the objective value.  We define:
\begin{equation*}
v_{ij} := \frac{M(X_i X_j, \mathbb{I}) + M(Y_i Y_j, \mathbb{I}) + M(Z_i Z_j, \mathbb{I})}{3} \text{, and } x_{ij}:=-v_{ij},
\end{equation*}
for all $ij \in E := E(K_n)$. We say an edge is \emph{large} if $x_{ij}\approx 1$ (and $v_{ij}\approx -1$).
\end{definition}

We will also need to define a modified version of Lasserre$_1$.  This is simply Lasserre$_1$ supplemented with positivity of $2$-qubit marginals.  This is a relaxation of intermediate strength between Lasserre$_1$ and Lasserre$_2$, so we have denoted is Lasserre$_{1.5}$.  The additional marginal constraints are crucial for the analysis presented in \cite{P20}, so a precise understanding of its strength is very interesting.  We will define it only for unweighted graphs, since it will not be used in the context of the approximation algorithm.  

\begin{problem}[Lasserre$_{1.5}(G)$]\label{prob:5}
Given a graph $G=(V, E)$ on $n$ vertices, for each $ij \in E$ let $C_{ij}$ be as defined in Lasserre$_1$.  Solve the following SDP:

\begin{alignat}{2}
\label{sdp-relax:obj} \max \sum_{ij\in E} & \mathrlap{\left(1+\Tr(C_{ij} M)\right)}\\
s.t.\qquad
 \label{sdp-relax:diag} M(\Gamma, \Gamma) &= 1 \quad && \forall \Gamma\in \mathcal{P}_n(1),\\
 \label{sdp-relax:diag-block-zero} M(\Gamma,\Phi) &= 0 \quad && \forall \Gamma, \Phi \in \mathcal{P}_n(1) \text{ s.t. $\Gamma\Phi$ is not Hermitian} ,\\
 \label{sdp-relax:2-marginals} M(\sigma_i,\eta_j) &= \Tr[\sigma \otimes \eta\ \rho_{ij}] \quad &&\forall ij \in E \text{ and $\sigma$, $\eta \in \{\X, \Y, \Z\}$},\\
 \label{sdp-relax:1-marginals-i} M(\sigma_i, \mathbb{I}) &= \Tr[\sigma_i \otimes \mathbb{I}\ \rho_{ij}] \quad &&\forall ij\in E \text{ and $\sigma\in \{\X, \Y, \Z\}$},\\
 \label{sdp-relax:1-marginals-j} M(\sigma_j, \mathbb{I}) &= \Tr[\mathbb{I} \otimes \sigma_j\ \rho_{ij}] \quad &&\forall ij\in E \text{ and $\sigma \in \{\X, \Y, \Z\}$},\\
 \label{sdp-relax:p_ij-trace} \Tr[\rho_{ij}] &= 1 \quad &&\forall ij \in E,\\
 \label{sdp-relax:p_ij-pos}  \rho_{ij} &\semigeq 0 \quad &&\forall ij \in E,\\
 \rho_{ij} &\in \mathcal{H}(\mathbb{C}^{4 \times 4}) \quad &&\forall ij \in E,\\
 \label{sdp-relax:M-pos} M &\semigeq 0,\\
 M &\in \mathcal{S}(\mathbb{R}^{(3n+1)\times(3n+1)}).
\end{alignat}
\end{problem}

Note that the main difference between Lasserre$_{1.5}$ and Lasserre$_1$ is the presence of constraints \Cref{sdp-relax:2-marginals}-\Cref{sdp-relax:1-marginals-j}.  As stated previously, their intent is to force consistency of $2$-local moment matrices by forcing them to correspond to physical $2$-qubit density matrices.  These relaxations are important because they relax quantum states, hence can be used as upper bounds on $2$-Local Hamiltonian problems:

\begin{theorem}\label{thm:is_relaxation}
For any constant $k$ Lasserre$_k$ is an efficiently computable semidefinite program that provides an upper bound on $QMC(G, w)$.
\end{theorem}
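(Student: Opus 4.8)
The plan is to establish the two assertions separately: that Lasserre$_k$ is a polynomial-size semidefinite program, and that its optimum upper-bounds $QMC(G,w)$. For computability, I would first note that the moment matrix $M$ is indexed by $\mathcal{P}_n(k)$, so it has dimension $\gamma = |\mathcal{P}_n(k)| = \sum_{j=0}^{k}\binom{n}{j}3^{j} = O(n^k)$, which is polynomial in $n$ for constant $k$, while the linear constraints \eqref{eq:diag_constraint}--\eqref{eq:neg_composition_constraint} number at most $\gamma^2$. The feasible set is closed, and it is bounded because \eqref{eq:diag_constraint} together with $M \succeq 0$ forces $|M(\Gamma,\Phi)| \le 1$ entrywise; hence the supremum is attained (justifying ``$\max$'' in \Cref{def:lasserre_k}) and the program can be solved to additive accuracy $\eps$ in time $\operatorname{poly}(n,\log(1/\eps))$ by standard SDP algorithms, which is the sense in which Lasserre$_k$ is ``efficiently computable.''

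For the upper bound it suffices to exhibit, for every $n$-qubit density matrix $\rho$, a feasible solution $M=M(\rho)$ whose objective value equals $\Tr(H\rho)$; taking $\rho$ to be the projector onto a top eigenvector of $H$ then gives that the optimum of Lasserre$_k(G,w)$ is at least $QMC(G,w)$. I would define $M(\Gamma,\Phi) := \operatorname{Re}\Tr(\Gamma\Phi\,\rho)$ for $\Gamma,\Phi \in \mathcal{P}_n(k)$. Most constraints are then immediate: $\Gamma^2 = \mathbb{I}$ gives \eqref{eq:diag_constraint}; if $\Gamma\Phi$ is not Hermitian it equals $\pm i$ times a Pauli string, so $\Tr(\Gamma\Phi\rho)$ is purely imaginary and $M(\Gamma,\Phi)=0$, which is \eqref{eq:hermitian_constraint}; constraints \eqref{eq:composition_constraint}--\eqref{eq:neg_composition_constraint} hold because $\Gamma\Phi$ and $\Gamma'\Phi'$ are literally equal (respectively negatives) as operators; and $M$ is symmetric since $\Tr(\Gamma\Phi\rho) = \overline{\Tr(\Phi\Gamma\rho)}$ forces $M(\Gamma,\Phi)=M(\Phi,\Gamma)$. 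For the objective, $C_{ij}$ is symmetric with $C_{ij}(\sigma_i,\sigma_j)=C_{ij}(\sigma_j,\sigma_i)=-\tfrac12$ for $\sigma\in\{X,Y,Z\}$, so $\Tr(C_{ij}M) = -\sum_{\sigma\in\{X,Y,Z\}} M(\sigma_i,\sigma_j) = -\Tr\!\big((X_iX_j+Y_iY_j+Z_iZ_j)\rho\big)$, whence $\sum_{ij\in E} w_{ij}\big(1+\Tr(C_{ij}M)\big) = \Tr(H\rho)$ by linearity and \Cref{def:qmc}.

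The one point requiring care — the main obstacle — is $M \succeq 0$, where the fact that we use the real rather than the complex hierarchy matters. I would handle it in two steps. First, the complex moment matrix $\widetilde M(\Gamma,\Phi) := \Tr(\Gamma^{\dagger}\Phi\,\rho) = \Tr(\Gamma\Phi\,\rho)$ (Paulis being Hermitian) is Hermitian and PSD: for any $c\in\mathbb{C}^{\mathcal{P}_n(k)}$, writing $P := \sum_{\Phi} c_\Phi\Phi$, we get $c^{\dagger}\widetilde M c = \Tr(P^{\dagger}P\,\rho) \ge 0$ since $P^{\dagger}P \succeq 0$ and $\rho \succeq 0$. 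Second, $M = \operatorname{Re}\widetilde M$ (this is precisely the observation that $\operatorname{Re}\Tr(\Gamma\Phi\rho)$ equals $\Tr(\Gamma\Phi\rho)$ when $\Gamma\Phi$ is Hermitian and $0$ otherwise), and the real part of a Hermitian PSD matrix is real-symmetric PSD: writing $\widetilde M = M + iB$ with $B$ real antisymmetric, for any real $v$ we have $v^{\mathsf{T}} M v = v^{\mathsf{T}}\widetilde M v \ge 0$, because $v^{\mathsf{T}} B v = 0$ by antisymmetry and $v^{\mathsf{T}}\widetilde M v$ is real by Hermiticity. Combining this with the routine feasibility checks and the objective computation above completes the argument.
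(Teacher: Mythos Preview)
Your proof is correct and follows essentially the same approach as the paper: both construct the complex moment matrix $\widetilde M(\Gamma,\Phi)=\Tr(\Gamma\Phi\rho)$ from an optimal state and then pass to its real part to obtain a feasible real-symmetric PSD solution with matching objective. Your $M=\operatorname{Re}\widetilde M$ is exactly the paper's $M'=(\widetilde M+\widetilde M^{*})/2$, and your direct argument that the real part of a Hermitian PSD matrix is real-symmetric PSD is equivalent to the paper's observation that $\widetilde M^{*}$ is also PSD so the average is; the efficiency discussion likewise mirrors the paper's.
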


\begin{proof}
\knote{I changed this paragraph, please check}\onote{added the bit about the objective. This may be too detailed if we want to save space} Except for $M\semigeq 0$, the constraints and objective are affine on the entries of $M$, hence we do indeed have an SDP.  Since $M$ is of polynomial size (it has length $O(n^k)$ on one side), and there are polynomially many linear constraints ($O(n^{2k})$ many), the usual considerations show computational efficiency: All feasible $M$ have bounded norm since moment matrices are constrained to be $1$ along the diagonal, the identity matrix is feasible so strong duality holds, and there is a ``ball'' of operators around the identity which are feasible.  Hence, the program can be solved to arbitrary additive precision in polynomial time via the ellipsoid or interior point methods (e.g.,~\cite{V96}). 

Let $\ket{\psi}$ be an eigenvector corresponding to $\lambda_{max}(H)$ where $H$ is the $2$-Local Hamiltonian in $QMC$ (\Cref{def:qmc}).  Set $M(\Phi, \Gamma)=\Tr(\Phi \Gamma \rho)$.  $M$ is PSD since for a complex vector $v$, $v^\dagger M v=\Tr(S^2 \rho)$ for $S$ some polynomial as previously described.  The remaining issue is that if $\Phi\Gamma$ is not Hermitian then the corresponding value of $M$ is purely imaginary, so we may not be satisfying \Cref{eq:hermitian_constraint}.  The solution is simply to set a new moment matrix $M'$ as $M'=(M+M^*)/2$ where $M^*$ is the same as $M$ but with complex conjugate entries.  Note that $M'$ is PSD since $M^*$ must also be PSD.  For the objective, note that 
$$
\Tr\left(w_{ij}(\mathbb{I}-X_i X_j - Y_i Y_j-Z_i Z_j) \ket{\psi}\bra{\psi} \right)=w_{ij}(1+\Tr(M C_{ij})).
$$

Hence we have established that the optimal quantum state has the same energy as the objective for {\it some} feasible $M$.  It follows that the optimal $M$ has objective which upper bounds the optimal quantum solution.  

\end{proof}

\subsection{Relative Strength of Relaxations}\label{sec:rel_stren_relax}

An important contribution of this work is that the level-$2$ instances of the Lasserre Hierarchy satisfy important physical constraints which are not satisfied by the first level, even when the first level is further constrained with positive $2$-qubit marginals (Lasserre$_{1.5}$).  The physical property of interest can be thought of as a ``monogamy of entanglement'' with respect to specific partitions of the quantum state.  Let $G=(V, E)$ be a graph on $n+1$ vertices with vertex set $\{0, 1, ..., n\}$.  Further, let the edge set be $E=\{(0, 1), (0, 2), ..., (0, n)\}$.  This graph is easily visualized as $n$ ``leaves'' connected to a central vertex $0$.  The Hamiltonian for $QMC(G)$, $H$, can be thought of as ``testing'' entanglement along the edges since it is testing overlap with respect to a maximally entangled state.  If a state $\ket{\psi}$ had value $\braket{\psi|H|\psi}=4n$, then $\ket{\psi}$ would appear to be maximally entangled along all the edges in $E$.  Indeed, this is impossible, and the value of the maximum possible $\braket{\psi|H|\psi}$ (or the maximum eigenvalue of $H$) is an important result for the analysis presented in \cite{A20}:

\begin{theorem}[Star Bound \cite{A20, L62}]\label{thm:star_bound}
If $G$ is a star graph with $n$ leaves, $QMC(G)=2(n+1)$.
\end{theorem}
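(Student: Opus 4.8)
The plan is to diagonalize $H$ by re-expressing it purely in terms of commuting total-angular-momentum (Casimir) operators, and then optimize the resulting eigenvalue expression over the allowed spin quantum numbers. This is the Lieb--Mattis argument~\cite{L62} specialized to the star graph.

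\textbf{Reduction to Casimirs.} Introduce spin operators $\vec S_v := \tfrac12(\X_v,\Y_v,\Z_v)$ for each vertex $v$, so that $\X_0\X_i + \Y_0\Y_i + \Z_0\Z_i = 4\,\vec S_0\cdot\vec S_i$, and set $\vec T := \sum_{i=1}^n \vec S_i$, the total spin of the leaves. Then $H = n\,\mathbb{I} - 4\,\vec S_0\cdot\vec T$. Writing $\vec S_{\mathrm{tot}} := \vec S_0 + \vec T$ and using $2\,\vec S_0\cdot\vec T = \vec S_{\mathrm{tot}}^2 - \vec S_0^2 - \vec T^2$ together with $\vec S_0^2 = \tfrac34\mathbb{I}$, one obtains
\[ H \;=\; \Bigl(n+\tfrac32\Bigr)\mathbb{I} \;-\; 2\,\vec S_{\mathrm{tot}}^2 \;+\; 2\,\vec T^2 . \]

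\textbf{Eigenvalues and optimization.} The operators $\vec T^2$ and $\vec S_{\mathrm{tot}}^2$ commute (and commute with $H$), so $H$ is simultaneously diagonalizable with them; its eigenvalues are exactly $\lambda(T,S) = n + \tfrac32 - 2S(S+1) + 2T(T+1)$, where $T$ ranges over the possible total spins of $n$ spin-$\tfrac12$ particles (hence $T\in\{\tfrac n2,\tfrac n2-1,\dots\}$, in particular $T\le \tfrac n2$) and, by the rule for adding a single spin-$\tfrac12$, $S\in\{\,T-\tfrac12,\,T+\tfrac12\,\}$ (only $\tfrac12$ if $T=0$); every such pair occurs. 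For fixed $T$, the quantity $-2S(S+1)+2T(T+1)$ is maximized at $S = T-\tfrac12$, where it equals $2T+\tfrac12$, so $\lambda(T,S)\le n+2+2T\le 2(n+1)$ for every eigenvalue, with equality exactly when $(T,S)=(\tfrac n2,\tfrac{n-1}{2})$.

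\textbf{Attainability and difficulty.} To see the bound is tight, note that the fully symmetric subspace of the $n$ leaf qubits carries the spin-$\tfrac n2$ irrep; tensoring with qubit $0$'s spin-$\tfrac12$ decomposes as spin-$\tfrac{n+1}{2}\oplus$ spin-$\tfrac{n-1}{2}$, and the second summand is nonzero for $n\ge1$. Any vector there has $(T,S)=(\tfrac n2,\tfrac{n-1}{2})$, hence is an eigenvector of $H$ with eigenvalue $2(n+1)$; together with the upper bound this gives $QMC(G)=\lambda_{\max}(H)=2(n+1)$. There is no serious obstacle once $H$ is written through the commuting Casimirs; the only points to handle carefully are checking that this rewriting captures all of $H$'s eigenvalues (it does, since $H$ is a polynomial in $\vec T^2$, $\vec S_{\mathrm{tot}}^2$ and $\mathbb{I}$) and invoking angular-momentum addition correctly — it is precisely the constraint $S\ge T-\tfrac12$, combined with $T\le \tfrac n2$, that caps $\lambda(T,S)$ at $2(n+1)$.
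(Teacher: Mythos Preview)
Your proof is correct. The paper does not supply its own proof of this statement; it simply attributes it to Anshu--Gosset--Morenz~\cite{A20}, who derive it from the Lieb--Mattis result~\cite{L62} for the Heisenberg model on complete bipartite graphs. What you have written is exactly the Lieb--Mattis calculation specialized to the star (one spin-$\tfrac12$ coupled to the collective leaf spin $\vec T$), so you have filled in the details the paper only cites.
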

\noindent This was proven by Anshu, Gosset, and Morenz~\cite{A20} using a well-known monogamy of entanglement result for the Heisenberg model on complete bipartite interaction graphs by Lieb and Mattis~\cite{L62}.

The first observation we have is that the Lasserre$_1$ SDP violates this property in a maximal sense.  By this, we mean that the optimal solution has objective $4n$, rather than $2(n+1)$.  Using the informal language we used to describe Lasserre$_k$, if we only tracked $1$-local Pauli polynomials, we would think it is possible to have a state which is maximally entangled along many overlapping edges:
\begin{theorem}\label{thm:star_bound_l1}
For $G$ a star graph on $n$ vertices, Lasserre$_1(G)$ has optimal objective $4n$.
\end{theorem}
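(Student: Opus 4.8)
The plan is to construct an explicit feasible solution $M$ to Lasserre$_1(G)$ whose objective equals $4n$, which combined with \Cref{thm:is_relaxation}'s style of reasoning (the objective is always an upper bound on... actually no, we need the reverse) — more precisely, since $\sum_{ij\in E} w_{ij}(1 + \Tr(C_{ij}M)) = \sum_{ij\in E}(1 + \Tr(C_{ij}M))$ and each edge contributes $1 + \Tr(C_{ij}M) = 1 - \frac{1}{2}\cdot 3\cdot(-1)\cdot 2$-type terms, I should first compute the maximum possible per-edge contribution: for a single edge $ij$, $1 + \Tr(C_{ij}M) = 1 - \frac{1}{2}\big(M(X_i,X_j)+M(Y_i,Y_j)+M(Z_i,Z_j)\big)$, and since diagonal entries are $1$ and $M\succeq 0$ forces each $|M(\sigma_i,\sigma_j)|\le 1$, the largest this can be is $1 + \frac{3}{2} = \frac{5}{2}$; wait, but the edge term in the Hamiltonian has operator norm $4$ with eigenvalues... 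Let me instead recall $QMC$ on a single edge is $4$, so $4 = 1 + \Tr(C M)$ forces $\Tr(CM) = 3$, i.e. $M(X_i,X_j) = M(Y_i,Y_j) = M(Z_i,Z_j) = -2$?? That contradicts $|M(\sigma_i,\sigma_j)|\le 1$. So I must re-examine: actually the objective coefficient matrix $C_{ij}$ has $C_{ij}(\sigma_i,\sigma_j) = -\frac12$ AND $C_{ij}(\sigma_j,\sigma_i) = -\frac12$ by symmetry of $M\in\mathcal S$, so $\Tr(C_{ij}M) = -\sum_\sigma M(\sigma_i,\sigma_j)$, giving per-edge value $1 + \sum_\sigma(-M(\sigma_i,\sigma_j))$, maximized at $1+3 = 4$ when $M(\sigma_i,\sigma_j) = -1$ for all three Paulis. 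Good — so the target $4n$ is exactly "every edge saturates."

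So the heart of the proof is: exhibit a single PSD matrix $M$, indexed by $\mathcal P_n(1) = \{\mathbb I\}\cup\{\sigma_i : \sigma\in\{X,Y,Z\}, i\in[n]\cup\{0\}\}$ (size $3(n+1)+1$), satisfying the diagonal constraint \eqref{sdp-relax:diag}, the Hermiticity/zeroing constraints \eqref{sdp-relax:diag-block-zero}, and with $M(\sigma_0,\sigma_i) = -1$ for every leaf $i$ and every $\sigma\in\{X,Y,Z\}$. The natural guess is to take $M(\sigma_i,\eta_j)$ to be $0$ unless $\sigma = \eta$, and for the $X$-block (and identically $Y$, $Z$) to be the Gram matrix of vectors $u_0, u_1,\dots,u_n$ with $\langle u_i,u_i\rangle = 1$ and $\langle u_0, u_i\rangle = -1$ for $i\ge 1$: take $u_0 = e$ and $u_i = -e$ for all $i\ge 1$, a rank-one Gram matrix, which is PSD. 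Cross-Pauli blocks ($X$ vs $Y$ etc.) and the $\mathbb I$-row are set to $0$, and one checks this is consistent with \eqref{sdp-relax:diag-block-zero} since $\sigma_i\eta_j$ with $\sigma\ne\eta$ or $i\ne j$ is... anti-Hermitian exactly when $i=j$ and $\sigma\ne\eta$ (then $\sigma\eta = \pm i\,(\text{third Pauli})$), and Hermitian otherwise, so we need $M(\sigma_i,\eta_j)=0$ at least for $i=j,\sigma\ne\eta$, which holds. For $i\ne j$, $\sigma_i\eta_j$ is Hermitian so zero is allowed but not forced; setting it to zero is fine. The block-diagonal-in-Pauli-type $M$ is then a direct sum of the rank-one PSD matrix (three copies) plus the $1\times1$ block $[1]$ for $\mathbb I$, hence PSD. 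Objective: each of the $n$ edges contributes $1 - \big(M(X_0,X_i)+M(Y_0,Y_i)+M(Z_0,Z_i)\big) = 1-(-3) = 4$, total $4n$. Combined with \Cref{thm:is_relaxation} giving $QMC(G)\le$ Lasserre$_1(G)$ and the trivial bound Lasserre$_1(G)\le 4n$ (since the per-edge term is at most $4$, as each $|M(\sigma_0,\sigma_i)|\le 1$ by PSD-ness with unit diagonal), we conclude the optimum is exactly $4n$.

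I expect the only mild obstacle to be the bookkeeping of \emph{which} off-diagonal entries the Hermiticity constraint \eqref{sdp-relax:diag-block-zero} forces to zero and verifying our candidate respects all of them — in particular confirming that no constraint forces $M(\sigma_0,\sigma_i)$ itself to vanish, i.e. that $\sigma_0\sigma_i$ (product of Paulis on disjoint qubits) is Hermitian. It is, since $(\sigma_0\sigma_i)^\dagger = \sigma_0^\dagger\sigma_i^\dagger = \sigma_0\sigma_i$. The matching upper bound $\le 4n$ is immediate but worth stating explicitly so the "optimal objective $4n$" claim is fully justified rather than just "$\ge 4n$." One should also note this same $M$ is in fact the one obtained from the (non-physical) level-1 relaxation's perspective that "every pair looks like a singlet," which motivates the subsequent \Cref{sec:l2_is_phys} discussion that level $2$ rules this out.
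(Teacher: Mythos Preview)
Your proposal is correct and takes essentially the same approach as the paper: construct the identical feasible moment matrix (block-diagonal in Pauli type, with each block the rank-one Gram matrix of $u_0=e,\ u_i=-e$) to achieve $4n$, and pair it with the trivial per-edge upper bound $1-\sum_\sigma M(\sigma_0,\sigma_i)\le 4$ from unit-diagonal PSD-ness. The only cosmetic difference is that the paper verifies positive semidefiniteness of each block via a Schur complement, whereas you observe directly that the block is the Gram matrix of one-dimensional vectors and hence rank-one PSD; your route is slightly cleaner but equivalent.
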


\begin{proof}

\begin{figure}
    \hrule
    
    \caption{Matrices Needed For Proof of \Cref{thm:star_bound_l1}}
    \begin{subfigure}[t]{0.45\textwidth}
        \centering
        \begin{align*}
M(\sigma_i, \eta_k)=\begin{cases}
1 \text{ if $i=k$ and $\sigma=\eta$}\\
-1 \text{ if $i=0$ and $\sigma=\eta$}\\
1 \text{ if $i\neq 0 \neq k$, and $\sigma=\eta$}\\
1 \text{ if $\sigma_i=\mathbb{I}=\sigma_k$ }\\
0 \text{ otherwise}
\end{cases}.
\end{align*}
\caption{Formal Description of Optimal Moment Matrix}\label{eq:moment_for_1}
    \end{subfigure}
    \hfill
    \begin{subfigure}[t]{0.45\textwidth}
        \centering
        \renewcommand{\kbldelim}{[}
\renewcommand{\kbrdelim}{]}
\begin{align*}
  M = \kbordermatrix{
    &              0        & 1 & 2 & \hdots & n & \mathbb{I} \\
    0          & \mathbb{I} & -\mathbb{I}& -\mathbb{I} & \hdots & -\mathbb{I} & \mathbf{0}\\
    1          &-\mathbb{I} & \mathbb{I} & \mathbb{I}  & \hdots & \mathbb{I}  & \mathbf{0}\\
    2          &-\mathbb{I} & \mathbb{I} & \mathbb{I}  & \hdots & \mathbb{I}  & \mathbf{0}\\
    \vdots     &\vdots      & \vdots     & \vdots      & \ddots & \vdots      & \vdots\\
    n          & -\mathbb{I}& \mathbb{I} & \mathbb{I} & \hdots & \mathbb{I}           & \mathbf{0}\\
    \mathbb{I} & \mathbf{0} & \mathbf{0} & \mathbf{0} &  \hdots & \mathbf{0}  & 1
  }.
\end{align*}
\caption{Informal Description of Optimal Moment Matrix}\label{eq:informalM}
    \end{subfigure}
    \begin{subfigure}[t]{0.45\textwidth}
    \centering
    \begin{equation*}
       \begin{bmatrix}
1 & -1 & -1 & \hdots & -1\\
-1 & 1 & 1 & \hdots & 1\\
-1 & 1 & 1 & \hdots & 1\\
\vdots & \vdots & & \ddots & \vdots \\
-1 & 1 & 1 & \hdots & 1
\end{bmatrix}.
\end{equation*}
    \caption{One of the Diagonal Blocks of $M$}\label{eq:one_of_diag_blocks}
    \end{subfigure}
    \begin{subfigure}[t]{0.55\textwidth}
    \centering
    $$
\begin{bmatrix}  
1 & 1 \hdots & 1\\
\vdots & \vdots  & \vdots\\
1 & 1 \hdots & 1
\end{bmatrix}-\begin{bmatrix} -1 \\ -1 \\ \vdots \\-1 \end{bmatrix}\begin{bmatrix} -1 & -1 &  \hdots & -1 \end{bmatrix}\semigeq 0,
$$
    \caption{Schur Complement to complete the proof.}\label{eq:l1_schur}
    \end{subfigure}
\hrule
\end{figure}
We can demonstrate an optimal solution to Lasserre$_1(G)$ by showing that the solution which ``picks up'' all the edges is feasible.  Since the moment matrix must be PSD, and since the diagonal blocks are forced to be $\mathbb{I}$,  this is the maximum possible solution for any SDP variable and a solution with this objective value must be optimal.  Define moment matrix $M$ as in \Cref{eq:moment_for_1}.  We can describe this matrix pictorially by partitioning the rows and columns into blocks corresponding to individual qubits, i.e.\ block $i$ would correspond to indices $\{\X_i, \Y_i,\Z_i\}$, as well as a block corresponding to the $\mathbb{I}$ index.  With these partitions, we can write $M$ as in  \Cref{eq:informalM} where $\mathbf{0}$ denotes the zero vector $[0, 0, 0]$.  It is easy to see that $M$ satisfies all the linear constraints on the matrix elements, the remaining task is to prove it is PSD.  We may reshuffle the rows/columns to write $M$ in block diagonal form as four blocks, where one of the blocks is $1$ and three of the blocks have the form \Cref{eq:one_of_diag_blocks}.  Using the method of Schur complements\cite{Z06}, PSDness of this matrix above is equivalent to \Cref{eq:l1_schur}, which holds trivially.  Hence $M$ is block diagonal with PSD blocks so it must be PSD.  

\end{proof}

One direction for fixing this problem is to note that many of the low-order statistics present in the optimal moment matrix (see \Cref{eq:informalM}) are non-sensical even for very small states.  The submatrix corresponding to qubits $1$ and $2$, for instance, has the form:
\begin{align*}
\kbordermatrix{
    &              \X_1       & \Y_1 & \Z_1 & \X_2 & \Y_2 & \Z_2 \\
    \X_1          & 1 & 0 & 0 & 1 & 0 & 0\\
    \Y_1          &0 & 1 & 0  & 0 & 1  & 0\\
    \Z_1          &0 & 0 & 1  & 0 & 0  & 1\\
    \X_2     &1      & 0     & 0      &1 & 0      & 0\\
    \Y_2          & 0& 1 & 0 &0 & 1     &0\\
    \Z_2 & 0&0 & 1 &  0 & 0  & 1
  }
\end{align*}
which is impossible for any (reduced) two qubit density martrix $\rho_{12}$ \cite{H96}.  This can be remedied in the SDP by adding variables $\rho_{ij}\in \mathbb{C}^{4\times 4}$ for every pair of vertices $i, j$ which force submatrices of the above form to conform to moment matrices of legitimate $2$-qubit quantum states, as in \cite{P20} and as in in Lasserre$_{1.5}$.

Unfortunately, this relaxation is still not strong enough to enforce the star bound (\Cref{thm:star_bound}):
\begin{theorem}\label{thm:stren_of_1point5}
If $G$ is a star graph on $n$ vertices, then Lasserre$_{1.5}(G)$ has optimal objective:
$$
n+3\sqrt{n(1+(n-1)/3)}.
$$
\end{theorem}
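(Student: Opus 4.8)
Write the star with center $0$ and leaves $1,\dots,n$, so its edge set is $\{0k : k\in[n]\}$; recall (cf.\ the discussion before \Cref{prob:5}) that Lasserre$_{1.5}$ carries a positive-semidefinite marginal $\rho_{ij}$ for \emph{every} pair of vertices $i,j$, tied to the corresponding $2\times2$ block of $M$ through \Cref{sdp-relax:2-marginals}, and that positivity of the leaf--leaf marginals $\rho_{kl}$ is the ingredient absent from Lasserre$_1$. The plan is symmetry reduction followed by a Schur complement, in the spirit of \Cref{sec:rel_stren_relax}. The SDP is invariant under $\mathcal{G}:=SU(2)\times S_n$, where $S_n$ relabels the leaves and $U\in SU(2)$ conjugates $M$ and every $\rho_{ij}$ by $U^{\otimes(n+1)}$ (equivalently applies the induced $SO(3)$ rotation to every single-qubit Pauli label, keeping $M$ real): the objective is $\mathcal{G}$-invariant because each edge term equals $2(\mathbb{I}-\mathrm{SWAP})$ and hence commutes with $U\otimes U$, and the feasible region is a $\mathcal{G}$-invariant convex set since every constraint is affine in $(M,\{\rho_{ij}\})$ aside from the PSD cones. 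Haar-averaging an optimal solution over $\mathcal{G}$ then gives a $\mathcal{G}$-invariant optimal solution, so it suffices to optimize over $\mathcal{G}$-invariant feasible points.

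\textbf{Reducing to two scalars.} By Schur--Weyl duality the commutant of $\{U\otimes U : U\in SU(2)\}$ in $\mathbb{C}^{4\times4}$ is spanned by $\mathbb{I}$ and $\sum_{\sigma}\sigma\otimes\sigma$, so (with unit trace and leaf symmetry) $\rho_{0k}=\tfrac14(\mathbb{I}+a\sum_\sigma\sigma\otimes\sigma)$ on all edges and $\rho_{kl}=\tfrac14(\mathbb{I}+b\sum_\sigma\sigma\otimes\sigma)$ on all leaf pairs, and these Werner states are PSD iff $a\in[-1,\tfrac13]$ and $b\in[-1,\tfrac13]$. For the moment matrix, $SO(3)$-invariance forces each $3\times3$ block to be a scalar times $\mathbb{I}_3$, and the constraints then pin the diagonal blocks to $\mathbb{I}_3$, the $\mathbb{I}$ row/column to $(0,\dots,0,1)$, the center--leaf blocks to $a\mathbb{I}_3$, and the leaf--leaf blocks to $b\mathbb{I}_3$ (the latter two via \Cref{sdp-relax:2-marginals}). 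Thus, up to the decoupled $\mathbb{I}$-entry, $M=A\otimes\mathbb{I}_3$ with $A=\left(\begin{smallmatrix}1 & a\,\mathbf 1^{\top}\\ a\,\mathbf 1 & (1-b)I_n+bJ_n\end{smallmatrix}\right)$, $\mathbf 1=(1,\dots,1)$, $J_n=\mathbf 1\mathbf 1^{\top}$, and $M\succeq0\iff A\succeq0$. A Schur complement of the top-left entry of $A$ yields $A\succeq0\iff 1-b\geq0$ and $1+(n-1)b-na^2\geq0$.

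\textbf{Solving the reduced program.} Since $\Tr(C_{0k}M)=-\sum_\sigma M(\sigma_0,\sigma_k)=-3a$, the objective is $n(1-3a)$, so I minimize $a$ subject to $a,b\in[-1,\tfrac13]$ and $na^2\leq 1+(n-1)b$. As $a$ appears only through $a^2$, take $b=\tfrac13$, giving $na^2\leq\tfrac{n+2}{3}$, i.e.\ $a\geq-\sqrt{(n+2)/(3n)}$, which is $\geq-1$ so the Werner bound on $a$ is slack. The point $a=-\sqrt{(n+2)/(3n)}$, $b=\tfrac13$ is feasible (with equality in $1+(n-1)b-na^2\geq0$ and in $\rho_{kl}\succeq0$), hence optimal, and the optimal value is
\begin{equation*}
n\Bigl(1+3\sqrt{\tfrac{n+2}{3n}}\Bigr)=n+3\sqrt{\tfrac{n^2(n+2)}{3n}}=n+3\sqrt{\tfrac{n(n+2)}{3}}=n+3\sqrt{n\bigl(1+\tfrac{n-1}{3}\bigr)}.
\end{equation*}

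\textbf{Main obstacle.} The delicate steps are the symmetry reduction: checking that Haar-averaging preserves feasibility (routine, once one notes the feasible set is convex and $\mathcal{G}$-invariant) and, more substantively, correctly enumerating the $\mathcal{G}$-invariant solutions --- the Schur--Weyl computation forcing the Werner form of all marginals, and the point that positivity of the leaf--leaf marginal (the inequality $b\leq\tfrac13$) is precisely what rules out the Lasserre$_1$ optimum $a=-1$ (value $4n$) and forces $a=-\sqrt{(n+2)/(3n)}$ instead. Everything after the reduction is a transparent two-variable convex program.
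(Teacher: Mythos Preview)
Your proof is correct and follows essentially the same route as the paper: symmetrize the optimal solution to reduce to a two-parameter family, then use a Schur complement on the center-vs-leaves block structure to obtain $1+(n-1)b-na^2\geq 0$, and finally invoke the leaf--leaf marginal constraint $b\leq\tfrac13$ to pin down the optimum. The only cosmetic difference is that you average over the full $SU(2)$ symmetry (landing directly on Werner marginals via Schur--Weyl), whereas the paper averages over the discrete $S_3$ Pauli relabelings together with leaf permutations and then extracts a single Pauli-label block; both reductions yield the identical $(n{+}1)\times(n{+}1)$ matrix $A$ and the same quadratic inequality.
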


\begin{proof}
\begin{figure}
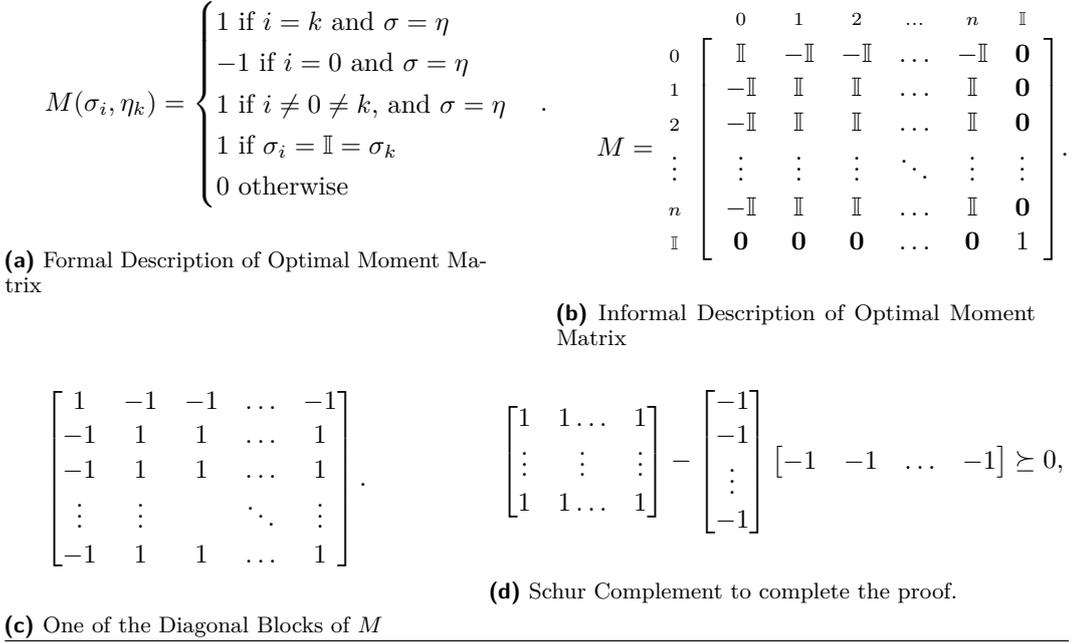

    \hrule
    
    \caption{Matrices Needed For Proof of \Cref{thm:stren_of_1point5}}
    \begin{subfigure}[t]{0.45\textwidth}
        \centering
        \begin{align*}
M = \kbordermatrix{
    &              0        & 1 & 2 & \hdots & n & \mathbb{I} \\
    0          & \mathbb{I} & P_1 & P_1 & \hdots & P_1 & \mathbf{0}\\
    1          & P_1^T & \mathbb{I} & P_2  & \hdots & P_2  & \mathbf{0}\\
    2          &P_1^T & P_2 & \mathbb{I}  & \hdots & P_2  & \mathbf{0}\\
    \vdots     &\vdots      & \vdots     & \vdots      & \ddots & \vdots      & \vdots\\
    n          & P_1^T & P_2 & \hdots & P_2 & \mathbb{I}           & \mathbf{0}\\
    \mathbb{I} & \mathbf{0} & \mathbf{0} & \mathbf{0} &  \hdots & \mathbf{0}  & 1
  },
\end{align*}
\caption{Symmeterized Optimal Moment Matrix}\label{eq:sym_moment_matrix}
    \end{subfigure}\label{eq:moment_for_1point5}
    \hfill
    \begin{subfigure}[t]{0.45\textwidth}
        \centering
        \begin{align*}
    Q:=\begin{bmatrix}
     1 & \alpha & \alpha & \hdots & \alpha\\
     \alpha & 1 & \beta & \hdots & \beta\\
     \alpha & \beta & 1 & & \beta \\
     \vdots & \vdots & & \ddots & \vdots\\
     \alpha & \beta & \hdots & \beta & 1
    \end{bmatrix},
\end{align*}

\caption{One Block of $M$}\label{eq:l15_one_block}
    \end{subfigure}
    \begin{subfigure}[t]{0.55\textwidth}
    \centering
    \begin{equation*}
     \begin{bmatrix}
      1 & \beta & \hdots & \beta\\
      \beta & 1 & & \beta \\
      \vdots & & \ddots & \vdots\\
      \beta & \hdots & \beta & 1
    \end{bmatrix}-
    \begin{bmatrix}
    \alpha \\ \alpha \\ \alpha \\ \vdots \\ \alpha 
    \end{bmatrix}\begin{bmatrix}
    \alpha & \alpha & \alpha & \hdots & \alpha 
    \end{bmatrix} \semigeq 0
\end{equation*}

    \caption{Schur Complement for Proof}\label{eq:l15_schur}
    \end{subfigure}
    \,\,\,\,
    \begin{subfigure}[t]{0.42\textwidth}
    \centering
    $$
U(\sigma_j, \sigma_i)=\begin{cases} 1 \text{ if $f(i)=j$ or $\sigma_i=\mathbb{I}=\sigma_j$}\\ 0 \text{ otherwise }\end{cases}.
$$
    \caption{Permutation Matrix for Symmeterization}\label{eq:l15_perm}
    \end{subfigure}
\hrule
\end{figure}

Let $M$ be the optimal moment matrix.  We may permute the leaves without changing the objective since the graph is unweighted, which corresponds to $M\rightarrow U M U^T$ for $U$ some orthogonal matrix.  Formally, if $f:[n]\rightarrow [n]$ is some permutation on the leaves, then define $U$ as in \Cref{eq:l15_perm} for all $\sigma\in \{\X, \Y, \Z\}$.  Then, the set of marginal density matrix variables can be redefined using $U M U^T$.  Similarly, we can permute $\{\X, \Y, \Z\}$ within each block.  We will assume that each $\{X_i, Y_i, Z_i\}$ is permuted in the same way so as to not change the objective.  By taking convex combinations of the described permutations (setting $M$ to be a convex combination of terms of the form $U M U^T$ for $U$ orthogonal matrices), we can then assume WLOG that $M$ has been fully symeterized and hence it is invariant to such permutations.  We can also assume that the entries $M(\sigma_i, \mathbb{I})$ are zero for $\sigma\in \{X, Y, Z\}$.  To see this observe that these terms do not participate in the objective, and that if they are non-zero in the optimal $M$ they can be set to zero without altering positivity.  We have shown that $M$ has the form \Cref{eq:sym_moment_matrix} where the rows/columns are indexed according to $\{X_0, Y_0, Z_0, X_1, Y_1, Z_1, ..., X_n, Y_n, Z_n, \mathbb{I}\}$.  Further, according to the symmeterization argument we can assume $P_1$ has the same entry on each diagonal, $P_2$ has the same entry on each diagonal and $P_2$ is symmetric.  Consider the submatrix corresponding to just the $\X$ operators, and denote if $Q$ as in \Cref{eq:l15_one_block} where $\alpha=M(\X_0,  X_i)$ and $\beta=M(X_1, X_2)$.  We know this matrix is PSD because it is a submatrix of a PSD matrix.  Further, since the submatrix corresponding to qubits $1$ and $2$ must correspond to the (valid) density matrix $\rho_{12}$, by \cite{H96, P20} $-1 \leq \beta \leq 1/3$.  Writing $Q$ in block diagonal form $\begin{bmatrix} A & B \\ B^T & C\end{bmatrix}$ with $A$ equal to the top left entry, and we can apply the method of Schur complements \cite{Z06} to obtain \Cref{eq:l15_schur}.

We can take the inner product with the all ones vector to obtain the inequality $n(1+(n-1)\beta)-\alpha^2 n^2\geq 0$.  Observe that this is a concave (down) parabola, to so to get a uniform bound on $\alpha$ we must set $\beta$ to the largest possible value.  Hence we derive $\alpha \geq -\sqrt{(1+(n-1)/3)/n}$ with $\beta=1/3$.  The objective is $n+(3 n \alpha)$, so far we have shown the optimal solution is at most $n+3\sqrt{n(1+(n-1)/3)}$.  To see the SDP achieves this upper bound note we can take $P_1\propto \alpha \mathbb{I}$ and $P_2\propto \beta \mathbb{I}$ for $\alpha $ and $\beta$ saturating the upper bound.  Just as in \Cref{thm:star_bound_l1}, $M$ decomposes into a block diagonal form where every block is PSD.  

\end{proof}

Lasserre$_2$, on the other hand, does satisfy the star bound, which will be an important fact for the analysis of our rounding algorithm. 

\begin{theorem}\label{thm:star-bound_l2}
Let $G=(V, E)$ be a star graph with $n$ leaves.  If $G$ is a subgraph of a larger graph $G'$ on $n'$ vertices, and if $M$ is any feasible solution to Lasserre$_2(G')$, then the values $\{x_e\}_{e \in E}$ (as in \Cref{alg:rounding}) satisfy $\sum_{e\in E} (1+3 x_e) \leq 2(n+1)$.  Additionally, Lasserre$_2(G)$ has optimal solution with objective $2(n+1)$.
\end{theorem}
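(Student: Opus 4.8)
The plan is to show that the quantity $\sum_{e\in E}(1+3x_e)$ equals a suitable expectation/trace-like expression against the moment matrix $M$, and then to exhibit a sum-of-squares (SOS) certificate that bounds it by $2(n+1)$. Writing $e = 0i$ for the $i$-th leaf edge, note $1+3x_{0i} = 1 - (M(X_0X_i,\mathbb{I}) + M(Y_0Y_i,\mathbb{I}) + M(Z_0Z_i,\mathbb{I}))$. By the composition constraints \eqref{eq:composition_constraint}, the entry $M(X_0X_i,\mathbb{I})$ equals $M(X_0,X_i)$, so $1+3x_{0i} = 1 - \sum_{\sigma}M(\sigma_0,\sigma_i)$; the target sum is then $n - \sum_{i=1}^{n}\sum_{\sigma\in\{X,Y,Z\}}M(\sigma_0,\sigma_i)$. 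The first task is therefore to re-derive the Lieb--Mattis/Anshu--Gosset--Morenz star bound purely at the level of degree-$2$ pseudo-moments. I would mimic the classical proof of the star bound: it relies on the operator identity that the total spin $S_{\mathrm{tot}} = \frac12\sum_{v} \vec\sigma_v$ satisfies $S_{\mathrm{tot}}^2 \succeq 0$ together with a bipartite rearrangement separating vertex $0$ from the leaves. The key point is that $S_{\mathrm{tot}}^2$, and the relevant bipartite spin operators, are degree-$2$ polynomials in the Paulis, so the PSD moment-matrix constraint on $M\in\mathcal{S}(\mathbb{R}^\gamma)$ with $\gamma=|\mathcal P_n(2)|$ already ``sees'' the relevant inequality.

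Concretely, the hard part will be packaging the Lieb--Mattis argument as an explicit SOS decomposition valid for the real Lasserre$_2$ SDP. I would proceed as follows. First, consider the leaf-sum operator $A_\sigma := \sum_{i=1}^n \sigma_i$ for each $\sigma\in\{X,Y,Z\}$; each $A_\sigma^2$ and each $\sigma_0 A_\sigma$ is a linear combination of weight-$\le 2$ Paulis, hence $v(\cdot)$ is well-defined on them and equals $c^T M c$-type expressions. Second, I would use the key monogamy identity: for each $\sigma$, the two-body term $\sum_i \sigma_0\sigma_i = \sigma_0 A_\sigma$ can be bounded by completing the square, e.g.\ $-\sigma_0 A_\sigma \preceq \tfrac{n}{2}\,\mathbb{I} + \tfrac{1}{2n}\,(\text{something})$, but the honest route is to sum over $\sigma$ and use that $(\sigma_0 + \tfrac1\lambda A_\sigma)^2 \succeq 0$ gives, after applying $v$ and using $v(\sigma_0^2)=1$, $v(\sigma_i^2)=1$, the bound $-\tfrac{2}{\lambda}\sum_i v(\sigma_0\sigma_i) \le 1 + \tfrac{1}{\lambda^2}\big(n + \sum_{i\ne j}v(\sigma_i\sigma_j)\big)$. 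Third — and this is where the two-body consistency enforced by level $2$ is essential — I would bound $\sum_{i\ne j} v(\sigma_i\sigma_j)$ summed over $\sigma$: the three-Pauli sum $\sum_{\sigma} v(\sigma_i\sigma_j) = -3v_{ij}-\text{wait}$, more precisely equals $3x_{ij}\cdot(-1)$... I would instead directly use that $\big(\sum_i (X_i,Y_i,Z_i)\big)$ forms a spin-$\le n/2$ system, i.e.\ the operator $\sum_\sigma A_\sigma^2 = \big(\sum_\sigma\sum_i\sigma_i^2\big) + \sum_\sigma\sum_{i\ne j}\sigma_i\sigma_j$ has, on the symmetric subspace, eigenvalues at most $n(n+2)$; translating: $\sum_\sigma v(A_\sigma^2) \le n(n+2)$ should follow from a degree-$2$ SOS certificate built from the $\binom n2$ singlet projectors $\tfrac14(\mathbb{I}-X_iX_j-Y_iY_j-Z_iZ_j)\succeq 0$, each of which is PSD and weight-$2$, hence has nonnegative $v$-value. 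Combining these with the optimal choice of $\lambda$ (I expect $\lambda = \sqrt{n(n+2)/3}$ or similar, matching the product-state-free computation) yields $-\sum_i\sum_\sigma v(\sigma_0\sigma_i) \le n+2$ after simplification, i.e. $\sum_e(1+3x_e) = n - \sum_i\sum_\sigma v(\sigma_0\sigma_i) \le 2n+2$.

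For the second assertion — that Lasserre$_2(G)$ \emph{achieves} $2(n+1)$ — I would exhibit a feasible moment matrix attaining the bound. The natural candidate is $M(\Phi,\Gamma) = \Tr(\Phi\Gamma\,\rho^\star)$, symmetrized to kill non-Hermitian entries as in the proof of \Cref{thm:is_relaxation}, where $\rho^\star$ is the actual ground state of the star Hamiltonian (the genuine maximum-eigenvalue state, which by \Cref{thm:star_bound} has energy $2(n+1)$). Since that is an honest quantum state, the induced $M$ is automatically feasible for Lasserre$_2$ and its objective is exactly $2(n+1)$; combined with the upper bound this is optimality. The main obstacle, as flagged, is making the SOS certificate for the upper bound fully explicit and checking that every squared polynomial used has weight $\le 2$ so that the degree-$2$ moment matrix genuinely certifies nonnegativity; I expect the paper handles this in \Cref{sec:SOS}, and I would lean on the singlet-projector PSD facts and a single ``complete the square'' step as the backbone.
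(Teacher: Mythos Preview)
Your proposed SOS certificate has a genuine gap. Carrying out the ``complete the square'' step $(\sigma_0 + \tfrac{1}{\lambda}A_\sigma)^2 \succeq 0$ together with $\sum_\sigma v(A_\sigma^2) \le n(n+2)$ (from the singlet projectors) and optimizing $\lambda$ gives only
\[
-\sum_{i,\sigma} v(\sigma_0\sigma_i) \;\le\; \sqrt{3n(n+2)},
\]
hence $\sum_e(1+3x_e) \le n + \sqrt{3n(n+2)}$. This is exactly the Lasserre$_{1.5}$ value of \Cref{thm:stren_of_1point5}, not $2(n+1)$, and the two differ for every $n\ge 2$; your claimed simplification to $n+2$ does not go through. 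The reason is structural: the polynomials you square are all degree~$1$ in the Paulis, so the argument is really a Lasserre$_1$ SOS supplemented by the two-qubit marginal bound, and that combination cannot beat Lasserre$_{1.5}$. What is missing is a degree-$2$ square that exploits the identities $(\sigma_0\sigma_i)(\sigma_0\sigma_j) = \sigma_i\sigma_j$ and $(\sigma_0\sigma_i)(\tau_0\tau_i) = -\rho_0\rho_i$ for distinct $\sigma,\tau,\rho$; these tie the center--leaf moments directly to one another and to the leaf--leaf moments without the Cauchy--Schwarz loss.

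The paper does not write an explicit SOS. It symmetrizes $M$ over leaf permutations and over $\{X,Y,Z\}$, then extracts the $(3n{+}1)\times(3n{+}1)$ principal submatrix $Q$ indexed by the \emph{degree-two} monomials $\{\sigma_0\sigma_i : i\in[n],\ \sigma\in\{X,Y,Z\}\}\cup\{\mathbb{I}\}$. After symmetrization $Q$ depends on just two scalars, $\alpha = M(\sigma_0\sigma_i,\mathbb{I})$ and $\beta = M(\sigma_0\sigma_i,\sigma_0\sigma_j) = M(\sigma_i\sigma_j,\mathbb{I})$; a Schur complement with respect to the $\mathbb{I}$ corner and the all-ones test vector yields $-3n\alpha^2 - 2\alpha + 1 + (n-1)\beta \ge 0$. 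Bounding $\beta \le \tfrac13$ via \Cref{lem:L2-edge-bound} and solving the quadratic in $\alpha$ gives the star bound. Translated into SOS language, this is $v(p^2)\ge 0$ for the degree-$2$ polynomial $p = \lambda\,\mathbb{I} + \sum_{i,\sigma}\sigma_0\sigma_i$; the within-vertex cross terms $(\sigma_0\sigma_i)(\tau_0\tau_i)=-\rho_0\rho_i$ produce the crucial extra $-2\alpha$ that your degree-$1$ certificate lacks. Your achievability argument (feed the true optimal quantum state into $M$) is correct and matches the paper.
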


\begin{proof}
This proof is very similar to \Cref{thm:stren_of_1point5}.  Let $M$ be the optimal solution to Lasserre$_2(G')$.  We will once again permute the leaves.  Consider some permutation of the leaves of $G$ $f:[n] \rightarrow [n]$.  Let $U\in \mathbb{C}^{2^{n'} \times 2^{n'}}$ be the (Clifford) unitary which satisfies $U \sigma_i U^\dagger=\sigma_j$ if $\sigma\in \{\X, \Y, \Z\}$ and $f(i)=j$.  Then, define the $|\mathcal{P}_{n'}(2)|\times |\mathcal{P}_{n'}(2)|$ permutation matrix $W$ as:
$$
W(\Gamma, \Phi)=\begin{cases} 1 \text{ if $U \Phi U^\dagger=\Gamma$}\\ 0 \text{ otherwise }\end{cases}
$$
\noindent and define $M'=W M W^T$.  Constraint \Cref{eq:composition_constraint} holds because if $\Gamma \Phi = \Theta \Pi$ then $(U^\dagger \Gamma U)(U^\dagger \Phi U)=(U^\dagger \Theta U) (U^\dagger \Pi U)$ so
\begin{align*}
M'(\Gamma, \Phi)=M(U^\dagger \Gamma U, U^\dagger \Phi U)=M(U^\dagger \Theta U, U^\dagger \Pi U)=M'(\Theta, \Pi)
\end{align*}
\noindent The other constraints are similar.  Just as in \Cref{thm:stren_of_1point5}, we can also permute $\{X, Y, Z\}$ in each block. Note that by permuting the leaves we have potentially reduced the objective value of $M$ overall (including the edges outside of the star graph), however we have not changed the sum of the values of edges in the star: $\sum_{e\in E} (1+3 x_e)$.  

Now extract the submatrix corresponding to the $2$-local terms along the edges corresponding to the $G$, as well as the identity.  Denote this as $Q$.  Formally, $Q$ is the submatrix induced by the index set 
\begin{align}\label{eq:index_set_star}
\{\X_0 \X_1, \Y_0 \Y_1, \Z_0 \Z_1, \X_0 \X_2, ..., \X_0 \X_n, \Y_0 \Y_n, \Z_0 \Z_n, \mathbb{I}\}.
\end{align}  
By symmeterization, we may assume $Q$ has the following block form\footnote{Note that off-diagonal elements of off-diagonal blocks are imaginary for a quantum state: $X_0 X_j \cdot{} Y_0 Y_k \propto i Z_0 X_j Y_k$.  So, those entries correspond to non-Hermitain matrices and are set to zero by constraint \Cref{eq:hermitian_constraint}}:
\begin{align}
\begin{tabular}{|c|c|c|c|c|}  \hline   $\begin{matrix}1 &-\alpha &-\alpha \\ -\alpha &1  &-\alpha \\-\alpha &-\alpha &1 \end{matrix}$ & 
$\begin{matrix}& & \\ &\beta \mathbb{I} & \\ & & \end{matrix}$ & 
$\begin{matrix}& & \\ &\hdots & \\ & & \end{matrix}$ & 
$\begin{matrix}& & \\ &\beta \mathbb{I} & \\ & & \end{matrix}$ & 
$\begin{matrix}\alpha \\ \alpha \\ \alpha \end{matrix}$\\ \hline
 $\begin{matrix}& & \\ &\beta \mathbb{I} & \\ & & \end{matrix}$ & 
 $\begin{matrix}1 &-\alpha &-\alpha \\ -\alpha &1  &-\alpha \\-\alpha &-\alpha &1 \end{matrix}$ & 
 $\begin{matrix}& & \\ &\ddots & \\ & & \end{matrix}$ & 
  $\begin{matrix}& & \\ &\vdots & \\ & & \end{matrix}$ & 
  $\begin{matrix}\alpha \\ \alpha \\ \alpha \end{matrix}$\\ \hline
$\begin{matrix}& & \\ &\vdots & \\ & & \end{matrix}$ &
   $\begin{matrix}& & \\ &\ddots & \\ & & \end{matrix}$ & 
   $\begin{matrix}& & \\ &\ddots & \\ & & \end{matrix}$ & 
   $\begin{matrix}& & \\ &\beta \mathbb{I} & \\ & & \end{matrix}$ & 
   $\begin{matrix}\\ \vdots \\ \end{matrix}$\\ \hline
   $\begin{matrix}& & \\ &\beta \mathbb{I} & \\ & & \end{matrix}$ & 
   $\begin{matrix}& & \\ &\hdots & \\ & & \end{matrix}$ & 
   $\begin{matrix}& & \\ &\beta \mathbb{I} & \\ & & \end{matrix}$ & 
   $\begin{matrix}1 &-\alpha &-\alpha \\ -\alpha &1  &-\alpha \\-\alpha &-\alpha &1 \end{matrix}$ & 
   $\begin{matrix}\alpha \\ \alpha \\ \alpha \end{matrix}$\\ \hline
    $\begin{matrix}\alpha & \alpha & \alpha \end{matrix}$ & 
    $\begin{matrix}\alpha & \alpha & \alpha \end{matrix}$ & 
    $\begin{matrix}\alpha & \alpha & \alpha \end{matrix}$ & 
    $\begin{matrix}\alpha & \alpha & \alpha \end{matrix}$ & 
    1\\ \hline \end{tabular}
\end{align}
\noindent Note that rows and columns are indexed as the same order as the set in \Cref{eq:index_set_star}.  By \Cref{lem:L2-edge-bound}, $-1 \leq \beta \leq 1/3$.  

Now observe that $Q\semigeq 0$ since it is a submatrix of a PSD matrix.  Consider this matrix in $2\times 2$ block form $\begin{bmatrix}A & B\\C & D \end{bmatrix}$ where $D=1$ is the bottom right entry.  Exactly as in previous proofs we can then use the method of Schur complement to derive a necessary condition for positivity: 
$$
-3 n \alpha^2-2 \alpha +(1+(n-1)\beta) \geq 0.
$$
Observe that this is a concave down parabola in $\alpha$, so to get a uniform bound we need to take $\beta$ as large as possible.  Setting $\beta=1/3$ and solving for the zeros of the polynomial yields the bounds $-(n+1)/(3n) \leq \alpha \leq 1/3$.  The lower bound yields the star bound for the sum of the edge values.  

To show Lasserre$_2(G)$ has optimal solution meeting the star bound note from the previous analysis we have that the optimal objective is upper bounded by the star bound.  Further, by \Cref{thm:star_bound} and the fact that we are considering a relaxation on the $2$-Local Hamiltonian problem (\Cref{thm:is_relaxation}) the optimal objective must be at least the star bound.  The result follows.
\end{proof}

\subsection{Valid Inequalities for Lasserre$_2$}\label{sec:SOS}
We now turn our attention to deriving inequalities that any Lassere$_2$ solution must satisfy.  These will be used in the subsequent analysis of \Cref{alg:rounding}.  Let $M$ be a feasible solution to Lasserre$_2(G,w)$.  Consider a Cholesky decomposition $U^T U=M \succeq 0$; we will refer to the vectors obtained from the columns of $U$ as $u(\Gamma) \in \mathbb{R}^l$, corresponding to $\Gamma \in \mathcal{P}_n(k)$ and the rows or columns of $M$.  Thus we have
\begin{equation}\label{eq:u-M-relationship}
u(\Gamma)^T u(\Phi) = M(\Gamma, \Phi) \quad \forall \Gamma,\Phi \in \mathcal{P}_n(k),
\end{equation}
and in particular each $u(\Gamma)$ is a unit vector by \Cref{eq:diag_constraint}. 
We will establish linear inequalities on the values $v_e$ of \Cref{def:lasserre-edge-values}, as these are the input to the rounding portion of \Cref{alg:rounding}.  First we establish bounds on the $v_e$.

\begin{lemma}\label{lem:L2-edge-bound}
For all $ij \in E$, $0 \leq 1 - M(X_i X_j, \mathbb{I}) - M(Y_i Y_j, \mathbb{I}) - M(Z_i Z_j, \mathbb{I}) \leq 4$.
\end{lemma}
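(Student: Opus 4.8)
The plan is to express the edge contribution in terms of the Cholesky vectors $u(\Gamma)$ of $M$ (cf.\ \Cref{eq:u-M-relationship}) and to produce a single sum-of-squares certificate. First I would record that the composition constraint \Cref{eq:composition_constraint} gives $M(\sigma_i\sigma_j,\mathbb{I}) = M(\sigma_i,\sigma_j) = u(\sigma_i)^T u(\sigma_j)$ for each $\sigma\in\{X,Y,Z\}$, since $\sigma_i\cdot\sigma_j = (\sigma_i\sigma_j)\cdot\mathbb{I}$ and every operator appearing has weight at most $2$ (this last point is the only place level $2$, as opposed to level $1$, is needed). Thus the quantity to be bounded equals $1 - u(X_i)^T u(X_j) - u(Y_i)^T u(Y_j) - u(Z_i)^T u(Z_j)$.

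For the upper bound I would just invoke Cauchy--Schwarz: each $u(\Gamma)$ is a unit vector by \Cref{eq:diag_constraint}, so $u(\sigma_i)^T u(\sigma_j) \ge -1$ for each $\sigma$, and hence $1 - M(X_iX_j,\mathbb{I}) - M(Y_iY_j,\mathbb{I}) - M(Z_iZ_j,\mathbb{I}) \le 1+1+1+1 = 4$.

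The lower bound is the heart of the matter. I would consider the single vector $w := u(X_iX_j) + u(Y_iY_j) + u(Z_iZ_j) - u(\mathbb{I})$ and expand $0 \le \norm{w}^2$ using \Cref{eq:u-M-relationship}. The four squared norms contribute $4$. For the three cross terms among $u(X_iX_j),u(Y_iY_j),u(Z_iZ_j)$ I would use the Pauli multiplication rules $X_iX_j\cdot Y_iY_j = -Z_iZ_j$, $Y_iY_j\cdot Z_iZ_j = -X_iX_j$, $Z_iZ_j\cdot X_iX_j = -Y_iY_j$, all of which are Hermitian, so \Cref{eq:composition_constraint}--\Cref{eq:neg_composition_constraint} yield $u(X_iX_j)^T u(Y_iY_j) = -M(Z_iZ_j,\mathbb{I})$ and its two cyclic analogues; the three cross terms with $-u(\mathbb{I})$ each contribute $-2M(\sigma_i\sigma_j,\mathbb{I})$. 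Collecting everything gives $\norm{w}^2 = 4 - 4\big(M(X_iX_j,\mathbb{I}) + M(Y_iY_j,\mathbb{I}) + M(Z_iZ_j,\mathbb{I})\big)$, so $1 - M(X_iX_j,\mathbb{I}) - M(Y_iY_j,\mathbb{I}) - M(Z_iZ_j,\mathbb{I}) = \tfrac14 \norm{w}^2 \ge 0$.

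The main obstacle is the (mild) bookkeeping of which products of $X_iX_j, Y_iY_j, Z_iZ_j$ are Hermitian and which signs they carry under \Cref{eq:composition_constraint}--\Cref{eq:neg_composition_constraint}; getting these right is exactly what makes the $4\times 4$ principal submatrix of $M$ on the index set $\{X_iX_j, Y_iY_j, Z_iZ_j, \mathbb{I}\}$ certify nonnegativity of the edge contribution via the test vector $(1,1,1,-1)$. Equivalently, $\norm{w}^2\ge 0$ is the explicit degree-$2$ sum-of-squares certificate for the operator inequality $\mathbb{I} - X_iX_j - Y_iY_j - Z_iZ_j = 4\ket{\psi^-}\!\bra{\psi^-}\succeq 0$, which any feasible Lasserre$_2$ solution must respect.
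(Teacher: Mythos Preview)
Your proof is correct. The lower bound is essentially identical to the paper's: both expand the squared norm of $u(\mathbb{I}) - u(X_iX_j) - u(Y_iY_j) - u(Z_iZ_j)$ (your $w$ is its negative), using the same Pauli multiplication identities to evaluate the cross terms.

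The upper bound, however, is obtained differently. You pass to the weight-$1$ vectors via the composition constraint, $M(\sigma_i\sigma_j,\mathbb{I}) = u(\sigma_i)^T u(\sigma_j)$, and then apply Cauchy--Schwarz termwise to get $u(\sigma_i)^T u(\sigma_j) \ge -1$. The paper instead stays at level $2$ and builds three separate sum-of-squares certificates $z(1),z(2),z(3)$ (one for each ``wrong-sign'' Bell combination) and sums them. Your route is shorter and shows that the upper bound already holds at level~$1$; the paper's route is more uniform with the lower-bound argument and makes the underlying operator inequalities (nonnegativity of each Bell projector) explicit. One small quibble: your parenthetical that the composition identity is ``the only place level $2$ is needed'' is misleading as stated, since your lower-bound certificate uses the level-$2$ vectors $u(X_iX_j),u(Y_iY_j),u(Z_iZ_j)$ directly.
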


\begin{proof}
We will use properties of the vectors $u(\Gamma)$ and $M$.  In particular suppose $\sigma, \eta, \tau \in \{X,Y,Z\}$ are distinct Paulis, giving us:
\begin{alignat*}{2}
u(\sigma_i \sigma_j)^T u(\eta_i \eta_j) &= M(\sigma_i\sigma_j,\eta_i \eta_j)&& \quad [\text{by \Cref{eq:u-M-relationship}}]\\ 
&= M(\sigma_i \eta_i \sigma_j \eta_j,\mathbb{I}) && \quad [\text{by \Cref{eq:composition_constraint}}]\\
&= -M(\tau_i\tau_j,\mathbb{I}) && \quad [\text{by \Cref{eq:neg_composition_constraint}}].
\end{alignat*}
The above in conjunction with $u(\Gamma)^T u(\Gamma)=1$ and $u(\sigma_i \sigma_j)^Tu(\mathbb{I}) = M(\sigma_i\sigma_j,\mathbb{I})$ yields the lower bound we seek to prove:
\begin{align*}
    0 &\leq [u(\mathbb{I}) - u(X_iX_j) - u(Y_iY_j) - u(Z_iZ_j)]^T [u(\mathbb{I}) - u(X_iX_j) - u(Y_iY_j) - u(Z_iZ_j)]\\
    &= 4[1 - M(X_i X_j, \mathbb{I}) - M(Y_i Y_j, \mathbb{I}) - M(Z_i Z_j, \mathbb{I})].
\end{align*}
For the upper bound, let the vector $z(1) := u(\mathbb{I}) - u(X_iX_j) + u(Y_iY_j) + u(Z_iZ_j)$. Analogously to above, we have
\begin{equation}
    \label{eq:bell-SoS}0 \leq \frac{1}{4}z(1)^T z(1) = 1 - M(X_i X_j, \mathbb{I}) + M(Y_i Y_j, \mathbb{I}) + M(Z_i Z_j, \mathbb{I}).
\end{equation}
If we let $z(2) := u(\mathbb{I}) + u(X_iX_j) - u(Y_iY_j) + u(Z_iZ_j)$ and $z(3) := u(\mathbb{I}) + u(X_iX_j) + u(Y_iY_j) - u(Z_iZ_j)$, then
\begin{equation*}
    0 \leq \sum_{l \in [3]} \frac{1}{4}z(l)^T z(l) = 3 + M(X_i X_j, \mathbb{I}) + M(Y_i Y_j, \mathbb{I}) + M(Z_i Z_j, \mathbb{I}),
\end{equation*}
by the analogs of \Cref{eq:bell-SoS} for $z(2)$ and $z(3)$. The above inequality is equivalent to the upper bound we seek to prove.
\end{proof}

The lemma implies that $-1 \leq v_e  \leq \frac{1}{3}$ and $-\frac{1}{3} \leq x_e \leq 1$, for all $e \in E$. We next derive inequalities for odd cycles in $G$.
\begin{lemma}\label{lem:odd-cycle-ineq}
For an odd-length cycle $C \subseteq E$, $\sum_{e \in C} v_e \geq 2-|C|$.
\end{lemma}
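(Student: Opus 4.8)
The plan is to produce a sum-of-squares certificate, in the spirit of \Cref{lem:L2-edge-bound}. Since $v_e=\tfrac13\big(M(X_iX_j,\mathbb{I})+M(Y_iY_j,\mathbb{I})+M(Z_iZ_j,\mathbb{I})\big)$, it is enough to prove, for each fixed $\sigma\in\{\X,\Y,\Z\}$, that $\sum_{ij\in C}M(\sigma_i\sigma_j,\mathbb{I})\ge 2-|C|$; averaging the three resulting inequalities yields the lemma. Fix an odd cycle $C=(v_1,\dots,v_L)$ (indices mod $L$) and write $E_k:=\sigma_{v_k}\sigma_{v_{k+1}}$ for the weight-$2$ operator attached to the $k$-th cycle edge, so that by \Cref{eq:u-M-relationship} and \Cref{eq:composition_constraint} each Cholesky vector satisfies $u(\mathbb{I})^Tu(E_k)=M(\sigma_{v_k}\sigma_{v_{k+1}},\mathbb{I})$ and, for weight-$2$ products, $u(E_j)^Tu(E_k)=M(E_jE_k,\mathbb{I})$.

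Base case $L=3$: here the edge operators obey the multiplication table $E_1E_2=E_3$, $E_2E_3=E_1$, $E_3E_1=E_2$ and $E_k^2=\mathbb{I}$, so $\{\mathbb{I},E_1,E_2,E_3\}$ is closed under multiplication. Expanding $0\le\big\|u(\mathbb{I})+u(E_1)+u(E_2)+u(E_3)\big\|^2$ and rewriting each cross term via \Cref{eq:composition_constraint}, the multiplication table collapses the cross-term sum to $4\big(M(E_1,\mathbb{I})+M(E_2,\mathbb{I})+M(E_3,\mathbb{I})\big)$, giving $4+4\sum_k M(E_k,\mathbb{I})\ge 0$, i.e.\ $\sum_k M(E_k,\mathbb{I})\ge -1=2-3$. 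This is the prototype: the identity together with the edge vectors forms a configuration whose Gram matrix must be PSD, and PSDness alone delivers the tight constant.

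For general odd $L$ I would induct on $L$, reducing an odd $L$-cycle to an odd $(L-2)$-cycle by replacing a length-$3$ subpath $v_1v_2v_3v_4$ of $C$ with the single chord $\{v_1,v_4\}$ (three cycle edges removed, one chord added, parity preserved); edge values exist for all pairs by \Cref{def:lasserre-edge-values}, so the chord carries a value. This needs the four-vertex inequality $v_{ab}+v_{bc}+v_{cd}-v_{ad}\ge -2$: granting it, $\sum_{e\in C}v_e=(v_{v_1v_2}+v_{v_2v_3}+v_{v_3v_4})+\sum_{e\in C'}v_e\ge v_{v_1v_4}-2+\sum_{e\in C'}v_e$ with $C'=(v_1,v_4,\dots,v_L)$ an odd $(L-2)$-cycle, and the inductive hypothesis closes it. The four-vertex inequality would in turn be proved by a square of the same flavour as the base case, built from $u(\mathbb{I}),u(\sigma_{v_1}\sigma_{v_2}),u(\sigma_{v_2}\sigma_{v_3}),u(\sigma_{v_3}\sigma_{v_4})$, with the residual cross terms (coming from the chords $\{v_1,v_3\}$ and $\{v_2,v_4\}$) controlled using the bound $v_e\le\tfrac13$ of \Cref{lem:L2-edge-bound}, again averaged over $\sigma$. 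An alternative is to look for a single global square in $u(\mathbb{I})$ and all weight-$2$ vectors $u(\sigma_{v_a}\sigma_{v_b})$ on $C$, whose cross terms reorganise (via \Cref{eq:composition_constraint}) into the cyclic edge sum plus sign-controlled remainders, the $L=3$ case being exactly the base case above.

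The crux is obtaining the exact constant $2-|C|$. Viewing the $u(\sigma_{v_k})$ as mere unit vectors around a cycle (a Lasserre$_1$-style argument) only gives $\sum_e v_e\gtrsim -|C|\cos(\pi/|C|)$, and a naive cascade of triangle inequalities along a triangulation loses an additive constant at each chord; both fall short. The tightness genuinely uses the level-$2$ structure — the composition constraints identifying inner products $u(\Gamma)^Tu(\Phi)$ with moment-matrix entries, together with the edge bounds of \Cref{lem:L2-edge-bound} — and the delicate point is to check that the weight-$4$ cross terms appearing in the chosen square cancel or are absorbed by those edge bounds, so that no slack leaks out.
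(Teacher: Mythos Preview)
Your base case is exactly right and matches the paper. The inductive scheme is also sound: the four-vertex inequality $v_{ab}+v_{bc}+v_{cd}-v_{ad}\ge -2$ is precisely what is needed to shrink an odd $L$-cycle to an odd $(L-2)$-cycle with no loss in the constant. The problem is that your proposed proof of this four-vertex inequality does not work. If you expand $\|u(\mathbb{I})+u(\sigma_a\sigma_b)+u(\sigma_b\sigma_c)+u(\sigma_c\sigma_d)\|^2$, the cross terms are $M(\sigma_a\sigma_c,\mathbb{I})$, $M(\sigma_b\sigma_d,\mathbb{I})$, and $M(\sigma_a\sigma_b\sigma_c\sigma_d,\mathbb{I})$. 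The last of these is a genuine weight-$4$ moment, not an edge value, and \Cref{lem:L2-edge-bound} says nothing about it. Moreover, $v_{ad}$ does not appear at all in this expansion (no pairwise product of consecutive edge operators yields $\sigma_a\sigma_d$), so the inequality you are after cannot even be read off. Bounding the weight-$2$ chord terms by $v_e\le\tfrac13$ only makes matters worse, since those terms enter with a positive sign and you would be throwing away $\tfrac23$ per chord.

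The missing ingredient is a second, sign-variant triangle inequality: from $\|u(\mathbb{I})+u(\sigma_i\sigma_j)-u(\sigma_i\sigma_k)-u(\sigma_j\sigma_k)\|^2\ge 0$ one obtains $1+v_{ij}-v_{ik}-v_{jk}\ge 0$, with all cross terms still weight~$2$. Summing this (on $\{a,c,d\}$, with the positive sign on $cd$) and your all-plus triangle inequality (on $\{a,b,c\}$) gives $2+v_{ab}+v_{bc}+v_{cd}-v_{ad}\ge 0$ exactly, with the chord $v_{ac}$ cancelling and no weight-$4$ terms anywhere. This is precisely how the paper proceeds: it triangulates the odd cycle and sums $k$ positive and $k-1$ negative triangle inequalities so that every chord cancels. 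Your induction is just the recursive phrasing of the same argument once you have the negative triangle inequality in hand.
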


\begin{proof}
We take the same basic approach as the proof of \Cref{lem:L2-edge-bound}, namely the inequality will follow from a positive combination of inequalities derived from $z^Tz \geq 0$, for appropriately chosen vectors $z$.  First we consider the case when $|C|=3$; let $C$ consist of $ij,ik,jk \in E$, and let $\sigma \in \{X,Y,Z\}$.  Letting the vector $z(\sigma) := u(\mathbb{I}) + u(\sigma_i \sigma_j) + u(\sigma_i \sigma_k) + u(\sigma_j \sigma_k)$, we see:
\begin{equation*}
    0 \leq \frac{1}{4}z(\sigma)^T z(\sigma) = 1 + M(\sigma_i \sigma_j, \mathbb{I}) + M(\sigma_i \sigma_k, \mathbb{I}) + M(\sigma_j \sigma_k, \mathbb{I}),
\end{equation*}
since $u(\sigma_p \sigma_q)^T u(\sigma_q \sigma_r) = M(\sigma_p \sigma_q,\sigma_q \sigma_r) = M(\sigma_p\sigma_r,\mathbb{I})$, for $p,q,r \in V$.  Averaging the above inequality over $\sigma \in {X,Y,Z}$ yields:
\begin{equation}\label{lem:prf:triangle-ineq-pos}
    0 \leq \frac{1}{3}\sum_{\sigma \in \{X,Y,Z\}} \frac{1}{4}z(\sigma)^T z(\sigma) = 1 + v_{ij} + v_{ik} + v_{jk},
\end{equation}
establishing the lemma for $|C|=3$.  We establish another flavor of the triangle inequality above in order to extend the $|C|=3$ bound to larger cycles.  This time we let $z(\sigma) := u(\mathbb{I}) + u(\sigma_i \sigma_j) - u(\sigma_i \sigma_k) - u(\sigma_j \sigma_k)$, ultimately yielding:
\begin{equation}\label{lem:prf:triangle-ineq-neg}
    0 \leq \frac{1}{3}\sum_{\sigma \in \{X,Y,Z\}} \frac{1}{4}z(\sigma)^T z(\sigma) = 1 + v_{ij} - v_{ik} - v_{jk}.
\end{equation}
Let us derive the bound for $|C|=5$.  Suppose the vertices of $C$ are in $[5]$.  We sum three instances of the above inequalities: \Cref{lem:prf:triangle-ineq-pos} for triangles on $\{1,2,5\}$ and $\{2,3,4\}$, and $1 + v_{45} - v_{24} - v_{25} \geq 0$ for the triangle on $\{2,4,5\}$.  The sum is $3 + v_{12} + v_{23} + v_{34} + v_{45} + v_{15} \geq 0$, as desired.  More generally, for $|C|=2k+1$ with $k > 2$, we may sum $k$ instances of \Cref{lem:prf:triangle-ineq-pos} and $k-1$ instances of \Cref{lem:prf:triangle-ineq-neg} to derive the desired inequality.  These $2k-1$ triangles represent a triangulation of the cycle $C$; chords introduced by the triangulation appear in exactly two triangles, and edges of $C$ appear in exactly one triangle.  The $k-1$ instances of \Cref{lem:prf:triangle-ineq-neg} are used to cancel out variables on such chords.    
\end{proof}

The inequalities of the above lemma are actually implied by level $2$ of the classical Lasserre Hierarchy for the classical Max Cut problem.  This is captured by restricting $\mathcal{P}_n(2)$ to only contain tensor products of at most two Pauli Z's and setting $v_{ij} := M(Z_iZ_j,\mathbb{I})$.
\section{Analysis of Lasserre$_2$ Rounding}
Our goal is to provide bounds on the quality of the rounded solutions produced by \Cref{alg:rounding}, $\rho_F$ and $\rho_{PS}$, relative to our Lasserre$_2$ relaxation.  For each of these solutions, we consider both the contribution of the edges selected by \Cref{alg:rounding} to be in $L$ as well as those in $S := E-L$.

\subsection{Bounding the Quality of the Matching-Based Solution}
\Cref{alg:rounding} leverages a matching on a graph obtained by keeping only edges with high-magnitude fractional SDP values.  Here we show that the resulting graph has bounded degree and why this approach produces a matching of relatively large weight.  

We consider the values $x_e$ for $e \in E=E(K_n)$ (\Cref{def:lasserre-edge-values}) obtained from the Lasserre$_2(G,w)$ SDP (\Cref{def:lasserre_k}), so that the SDP's objective value is $1+3x_e$ for each $e \in E$. Recall from \Cref{alg:rounding} that we threshold these variables so that $L = \{e \in E \mid x_e > \alpha(d) \geq 0\}$, where $\alpha(d) = \frac{d+3}{3(d+1)}$.  The star bound allows us to bound the maximum degree in the resulting graph, $G_L = (V,L)$.
\begin{lemma}\label{lem:degree-bound}
The graph $G_L$, as defined above, has maximum degree at most $d$.
\end{lemma}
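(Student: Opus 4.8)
The plan is to argue by contradiction using the star bound (\Cref{thm:star-bound_l2}). Suppose some vertex $i \in V$ has degree at least $d+1$ in $G_L$; pick $d+1$ of its neighbors $j_1,\dots,j_{d+1}$, so that each edge $ij_\ell \in L$ satisfies $x_{ij_\ell} > \alpha(d) = \frac{d+3}{3(d+1)}$. These $d+1$ edges, together with vertex $i$, form a star $G_0$ with $d+1$ leaves, which is a subgraph of the complete graph on which the edge values are defined — and the edge values arise from a feasible Lasserre$_2$ solution $M$ on the ambient graph. Hence by the first part of \Cref{thm:star-bound_l2} applied to this star, the values $\{x_{ij_\ell}\}_{\ell \in [d+1]}$ must satisfy
\begin{equation*}
\sum_{\ell=1}^{d+1} (1 + 3x_{ij_\ell}) \leq 2\big((d+1)+1\big) = 2(d+2).
\end{equation*}

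On the other hand, each term in the sum is strictly greater than $1 + 3\alpha(d) = 1 + \frac{d+3}{d+1} = \frac{2(d+2)}{d+1}$, so summing over the $d+1$ edges gives $\sum_{\ell=1}^{d+1}(1+3x_{ij_\ell}) > (d+1)\cdot \frac{2(d+2)}{d+1} = 2(d+2)$, which contradicts the star bound. Therefore no vertex can have degree exceeding $d$, and $G_L$ has maximum degree at most $d$.

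The only point requiring care is the bookkeeping: one must confirm that the threshold $\alpha(d)$ was chosen precisely so that $1 + 3\alpha(d)$ equals the ``average'' star-bound value $2(d+2)/(d+1)$ per edge for a star with $d+1$ leaves — this is exactly why the definition $\alpha(d) = \frac{d+3}{3(d+1)}$ appears in \Cref{alg:rounding}. I do not anticipate a genuine obstacle here; the inequality $x_e > \alpha(d)$ is strict while the star bound is non-strict, so the contradiction is clean and no boundary case arises. The main conceptual content is entirely front-loaded into \Cref{thm:star-bound_l2}, which already establishes the star bound for arbitrary subgraph stars; this lemma is just the combinatorial pigeonhole consequence.
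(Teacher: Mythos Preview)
Your proof is correct and follows essentially the same approach as the paper: assume a vertex has degree $\geq d+1$ in $G_L$, sum the contributions $(1+3x_e)$ over $d+1$ incident edges to exceed $2(d+2)$, and derive a contradiction with the star bound of \Cref{thm:star-bound_l2}. The paper's version is slightly terser but the argument is identical.
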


\begin{proof}
Suppose a vertex $i \in V$ in has degree at least $d+1$ in $G_L$.  Let $D \subseteq \delta(i)$ be a set of $d+1$ edges. We then have 
\begin{equation*}
\sum_{e \in D} (1+3x_e) > (d+1)(1+3\alpha(d)) = 2(d+2), 
\end{equation*}
violating \Cref{thm:star-bound_l2} for the star rooted at vertex $i$ and containing the edges in $D$. 
\end{proof}

\Cref{alg:rounding} finds a matching $F^*$ in $G_L$ that maximizes $\sum_{e \in F} w_e$ over matchings $F$ in $G_L$.  The algorithm obtains a quantum state $\rho_{F^*}$ from $F^*$ by putting a singlet, $\frac{1}{4}(\I-\X_i\X_j-\Y_i\Y_j-\Z_i\Z_j)$, on each edge in $F^*$ and a maximally mixed state, $\frac{1}{2}\I$, on each vertex unmatched by $F^*$.  Since the objective is $w_{ij}(\I-\X_i\X_j-\Y_i\Y_j-\Z_i\Z_j)$ for each edge $ij \in E$, this earns a weight of $4w_{ij}$ for every $ij \in F^*$ and a weight of $w_{ij}$ for every $ij \in E-F^*$.  If we define $y^*_e = 1$ for $e \in F^*$ and $y^*_e = 0$ otherwise, then we may express the weight earned by $\rho_{F^*}$ on an edge $e \in E$ as:
$4 w_e y^*_e + w_e(1-y^*_e) = w_e(1+3y^*_e)$.  We would like to show that the total weight of $F^*$ on the edges in $L$ is approximately the weight earned by the SDP on $L$: $\sum_{e \in L} w_e(1+3x_e)$.  The following lemma suggests a strategy for accomplishing this.

\onote{I rewrote the below, so please take a careful look.}
\begin{lemma}\label{lem:convex-combination-matchings}
If, for some $\beta \in [0,1]$, the vector $\beta x \in \mathbb{R}^{|E|}$ is a convex combination of matchings, then $\sum_{e \in L} w_e(1+3y^*_e) \geq \sum_{e \in L} w_e(1+3\beta x_e)$.
\end{lemma}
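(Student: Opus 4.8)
The plan is to use the single fact that $F^*$ is a \emph{maximum}-weight matching in $G_L=(V,L)$, hence its weight dominates that of every matching contained in $G_L$, combined with linearity of weight under convex combinations. First I would unpack the hypothesis: since $\beta x$ lies in the convex hull of matchings, write $\beta x = \sum_{k} \lambda_k \chi_{M_k}$, where each $M_k$ is a matching of the underlying (complete) graph on $V$, $\chi_{M_k}\in\{0,1\}^{E}$ is its incidence vector, $\lambda_k \geq 0$, and $\sum_k \lambda_k = 1$.

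Next, for each $k$ I would observe that $M_k \cap L$ is a matching in $G_L$: it is a subset of $M_k$, so no two of its edges share a vertex, and all of its edges lie in $L$. Because $F^*$ has maximum weight among matchings of $G_L$ and all weights $w_e$ are nonnegative, this gives $\sum_{e\in L} w_e \chi_{M_k}(e) = \sum_{e \in M_k \cap L} w_e \leq \sum_{e\in F^*} w_e = \sum_{e \in L} w_e y^*_e$, using the definition $y^*_e = \mathbf{1}[e\in F^*]$.

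Then I would average these inequalities with weights $\lambda_k$: using $\sum_k\lambda_k = 1$ and $\beta x_e = \sum_k \lambda_k \chi_{M_k}(e)$, we get $\sum_{e\in L} w_e (\beta x_e) = \sum_k \lambda_k \sum_{e\in L} w_e \chi_{M_k}(e) \leq \sum_{e\in L} w_e y^*_e$. Finally, multiply both sides by $3$ and add $\sum_{e\in L} w_e$ to both sides to arrive at $\sum_{e\in L} w_e(1+3 y^*_e) \geq \sum_{e\in L} w_e(1+3\beta x_e)$, as claimed.

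I do not expect a genuine obstacle here; the only point requiring a moment of care is that the convex-combination hypothesis concerns matchings of the \emph{whole} graph rather than of $G_L$, but this is harmless: intersecting any such matching with $L$ yields a matching of $G_L$, which is exactly what the maximality of $F^*$ needs. (It is also worth noting in passing that every entry of $\beta x$ automatically lies in $[0,1]$, being a convex combination of $0/1$ vectors, though this is not needed for the argument.)
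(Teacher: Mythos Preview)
Your proof is correct and follows essentially the same convex-combination-plus-maximality argument as the paper; you bound each matching against $F^*$ and then average, while the paper averages first and picks one good matching, but these are equivalent. Your treatment is in fact slightly more careful than the paper's: you explicitly restrict each $M_k$ to $L$ before invoking maximality of $F^*$ in $G_L$, a point the paper leaves implicit.
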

\begin{proof}
Write $\beta x = \sum_l \mu_l M_l$, where the latter is a convex combination of incidence vectors of matchings.  We have, for the vector of weights $w$, $\beta w^T x = \sum_l \mu_l w^T M_l$, so there is some $l'$ with $w^T M_{l'} \geq \beta w^T x$.  Since $y^*$ is the incidence vector of a maximum-weight matching with respect to $w$, we get $w^Ty^* \geq w^T M_{l'} \geq \beta w^T x$, completing our proof. 
\end{proof}

The hypothesis of the above lemma is equivalent to showing that $\beta x$ is in the convex hull of matchings for our graph $G$, and the convex hull of matchings in a graph is well understood.
\begin{theorem}[Pulleyblank and Edmonds~\cite{Pu74}; see~\cite{S03}, Section 25.2] \label{thm:edmonds-matching}
The convex hull of matchings in a graph $G=(V,E)$ is defined by the following linear inequalities:
\begin{alignat}{2}
 \label{matching:deg-constraints} \sum_{e \in \delta(i)} y_e &\leq 1 \quad && \forall i \in V,\\
  \label{matching:factor-critical-constraints} \sum_{e \in E(S)} y_e &\leq \frac{|S|-1}{2} \quad && \forall S \in \mathcal{F},\\
 \label{matching:nonneg-constraints} y_e &\geq 0 \quad && \forall e \in E,
\end{alignat}
where $\mathcal{F} := \{S \subseteq V \mid |S| \geq 3,\text{and }G[S]\text{ is factor critical and $2$-vertex connected}\}$. A graph $H$ is called factor critical (or hypomatchable) if deleting any vertex in $H$ leaves a graph containing a perfect matching (hence |H| must be odd). 
\end{theorem}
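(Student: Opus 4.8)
This is a classical result, so I would only reconstruct the standard argument. One direction is trivial: the incidence vector $\mathds{1}_F$ of a matching $F$ is nonnegative, meets each vertex at most once, and uses at most $(|S|-1)/2$ edges inside any odd set $S$, so in particular it satisfies \Cref{matching:deg-constraints}, \Cref{matching:factor-critical-constraints} (a fortiori for $S \in \mathcal{F}$), and \Cref{matching:nonneg-constraints}. Hence $\operatorname{conv}(\text{matchings})$ is contained in the polyhedron $P$ defined by those inequalities. The substance is the reverse inclusion $P \subseteq \operatorname{conv}(\text{matchings})$, equivalently that every extreme point of $P$ is a $0/1$ vector; I would prove this by a minimal-counterexample argument on $|V(G)| + |E(G)|$.

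So suppose $G$ is a smallest graph admitting a fractional vertex $x$ of $P$. First I would reduce to $0 < x_e < 1$ for all $e$: if some $x_e = 0$ we delete $e$, and if some $x_e = 1$ we delete both endpoints of $e$, checking in each case that the residual vector is a fractional vertex of the corresponding smaller polyhedron, contradicting minimality. Standard further reductions give that $G$ is connected and has minimum degree at least $2$, hence $|E| \ge |V|$. Since $x$ is a vertex of a polyhedron in $\mathbb{R}^{|E|}$ and no nonnegativity constraint is tight, there are $|E|$ linearly independent tight constraints among \Cref{matching:deg-constraints} and \Cref{matching:factor-critical-constraints}. I would then split into two cases. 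If some odd-set constraint \Cref{matching:factor-critical-constraints} is tight for a set $W$ with $1 < |W| < |V|$, then contracting $W$ produces $G/W$ and, separately, $G[W]$ with an added apex vertex; the vector $x$ induces fractional points in the matching polyhedra of these two strictly smaller graphs, which by minimality are convex combinations of matchings, and these recombine along the tight set $W$ into a convex decomposition of $x$ — a contradiction. If instead no proper odd-set constraint is tight, then the $|E|$ independent tight constraints are degree constraints (possibly together with the one for $S=V$), forcing $|E|=|V|$, every vertex covered by a tight degree constraint, and the incidence vectors $\{\mathds{1}_{\delta(v)}\}$ linearly independent, i.e.\ $G$ non-bipartite; combined with minimum degree $2$ this forces $G$ to be a single odd cycle with $x_e = \tfrac12$ on every edge, which then violates \Cref{matching:factor-critical-constraints} for $S = V$ (note $V \in \mathcal{F}$ since an odd cycle is $2$-vertex connected and factor critical) — again a contradiction.

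Finally I would address the Pulleyblank--Edmonds refinement that only sets $S \in \mathcal{F}$ are needed. For this one shows that the odd-set inequality for an arbitrary odd $S$ is a nonnegative combination of the degree inequalities \Cref{matching:deg-constraints}, the nonnegativity inequalities \Cref{matching:nonneg-constraints}, and the odd-set inequalities for sets in $\mathcal{F}$: if $G[S]$ is disconnected the inequality decomposes over its components; if $G[S]$ is connected but not $2$-vertex connected it decomposes over its blocks with a parity bookkeeping at the cut vertices; and if $G[S]$ is $2$-connected but not factor critical one invokes the Gallai--Edmonds structure of $G[S]$ to rewrite its odd-set inequality using degree constraints and strictly smaller odd-set inequalities.

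The main obstacle is the inductive core: making the contraction of a tight odd set $W$ yield instances that are genuinely smaller in the chosen measure \emph{and} whose matching polyhedra are exactly the ``induced'' ones, so that the inductive hypothesis applies and the two convex decompositions glue back into a valid decomposition for $G$ — together with the final case-analysis that reduces arbitrary odd sets to sets in $\mathcal{F}$.
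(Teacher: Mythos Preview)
The paper does not prove this theorem. It is stated as a classical result with citations to Pulleyblank--Edmonds~\cite{Pu74} and Schrijver~\cite{S03}, and is used as a black box in the analysis of \Cref{lem:matching_bound}. There is therefore nothing in the paper to compare your argument against.

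That said, your outline is a reasonable sketch of the standard proof of Edmonds' matching polytope theorem together with the Pulleyblank--Edmonds facet characterization. The minimal-counterexample reduction to $0 < x_e < 1$, the contraction of a tight odd set, and the residual odd-cycle case are the usual ingredients. One point to tighten if you were to write this out in full: in the ``no proper odd-set constraint tight'' case, you allow the constraint for $S = V$ among the tight constraints but then conclude all $|E|$ tight constraints are degree constraints; you should be explicit that either all tight constraints are degree constraints (forcing $|E| \leq |V|$, hence $|E| = |V|$ and $G$ an odd cycle, where the $S=V$ constraint then gives the contradiction), or the $S=V$ constraint is among them, in which case the counting and linear-independence argument needs slight adjustment. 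The recombination step after contracting a tight set $W$ also deserves care, as you note yourself. But none of this is relevant to the present paper, which simply invokes the result.
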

We obtain our main lemma by determining a relatively large $\beta \in [0,1]$ for which $y_e=\beta x_e$ is a feasible solution for the inequalities above, when we take $d=2$.
\begin{lemma}\label{lem:matching_bound}Suppose $v_e$ for $e \in E$ is a feasible solution to the SDP \Cref{def:lasserre_k}, and set $x_e := -v_e$. Let $L := \{e \in E \mid x_e > \frac{5}{9} = \alpha(2)\}$ and $G_L := (V, L)$ be the graph consisting of the edges in $L$.  If $F^*$ is an maximum-weight matching in $G_L$ with respect to the weights $w_e \geq 0$ for $e \in L$, then 
\begin{equation}\label{eq:lem:L-bound}
\frac{\sum_{e \in L} w_e(1+3y^*_e)}{\sum_{e \in L} w_e(1+3x_e)} > \frac{3}{4},
\end{equation}
where $\{0,1\} \ni y^*_e = 1$ if and only if $e \in F^*$.  If $S := E-L$, then
\begin{equation}\label{eq:lem:S-bound}
\frac{\sum_{e \in S} w_e(1+3y^*_e)}{\sum_{e \in S} w_e(1+3x_e)} \geq 
\frac{3}{8}.
\end{equation}
\end{lemma}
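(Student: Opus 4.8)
The two displayed bounds are essentially independent, and \Cref{eq:lem:S-bound} is immediate: since $F^{*}\subseteq L$ we have $y^{*}_{e}=0$ for every $e\in S$, so the numerator of \Cref{eq:lem:S-bound} is $\sum_{e\in S}w_{e}$; and since $e\in S$ means $x_{e}\le\alpha(2)=\tfrac59$, each factor $1+3x_{e}$ lies in $[0,\tfrac83]$ (the lower end by the comment after \Cref{lem:L2-edge-bound}), so the denominator is at most $\tfrac83\sum_{e\in S}w_{e}$; dividing gives $\ge\tfrac38$. All the work is in \Cref{eq:lem:L-bound}.

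For \Cref{eq:lem:L-bound}, write $w(A):=\sum_{e\in A}w_{e}$. Since $\rho_{F^{*}}$ earns $4w_{e}$ on $e\in F^{*}$ and $w_{e}$ on $e\in L\setminus F^{*}$, the numerator equals $3w(F^{*})+w(L)$, and clearing denominators shows \Cref{eq:lem:L-bound} is equivalent to $w(F^{*})>\sum_{e\in L}w_{e}z_{e}$ with $z_{e}:=\tfrac{9x_{e}-1}{12}$. By \Cref{thm:edmonds-matching}, $w(F^{*})=\max\{\sum_{e\in L}w_{e}y_{e}: y\in\mathrm{MP}(G_{L})\}$, where $\mathrm{MP}(G_{L})$ is the matching polytope of $G_{L}$ (its vertices are matching incidence vectors). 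Hence it suffices to show $z\in\mathrm{MP}(G_{L})$, in fact strictly inside: then the maximum-weight matching $F^{*}$ does at least as well as the fractional point $z$, with strict gain since $w\not\equiv 0$ on $L$ and a linear functional is not maximized in the interior of a full-dimensional polytope (the case $w\equiv 0$ on $L$ being vacuous).

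So the core task is to check the inequalities of \Cref{thm:edmonds-matching} for $G_{L}$ against $z$. By \Cref{lem:degree-bound} with $d=2$, $G_{L}$ has maximum degree at most $2$, so it is a disjoint union of paths and cycles, and every factor-critical $2$-vertex-connected induced subgraph of $G_{L}$ is an odd cycle. There are three families of constraints. (i) Nonnegativity $z_{e}>0$: immediate, since $x_{e}>\tfrac59$ gives $z_{e}>\tfrac13$. (ii) Degree constraints $z_{e}+z_{e'}\le 1$ for $e,e'$ meeting at a common vertex: apply \Cref{thm:star-bound_l2} to the two-leaf star on $e,e'$ to get $(1+3x_{e})+(1+3x_{e'})\le 6$, i.e.\ $x_{e}+x_{e'}\le\tfrac43$, hence $z_{e}+z_{e'}\le\tfrac56<1$ (and $z_{e}\le\tfrac23<1$ at a degree-one vertex). (iii) Odd-cycle constraints: for an odd cycle $C\subseteq L$ of length $m$ we need $\sum_{e\in C}z_{e}\le\tfrac{m-1}{2}$, equivalently $9\sum_{e\in C}x_{e}\le 7m-6$, which I verify with strict slack. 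For $m\in\{3,5\}$ use \Cref{lem:odd-cycle-ineq}, which gives $\sum_{e\in C}x_{e}\le m-2<\tfrac{7m-6}{9}$; for $m\ge 7$ sum $x_{e}+x_{e'}\le\tfrac43$ over the $m$ consecutive edge-pairs around $C$ to get $2\sum_{e\in C}x_{e}\le\tfrac{4m}{3}$, i.e.\ $\sum_{e\in C}x_{e}\le\tfrac{2m}{3}<\tfrac{7m-6}{9}$. All inequalities hold with slack, so $z$ is interior to $\mathrm{MP}(G_{L})$, completing the proof.

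The step I expect to be the obstacle is (iii) for large odd cycles: a uniform rescaling $\beta x$ of the SDP values into the matching polytope would force $\beta\to\tfrac12$ as the odd-cycle length grows, far too lossy for the target factor $\tfrac34$. The affine shift $z=\tfrac34x-\tfrac1{12}$ together with the two-regime bound on $\sum_{e\in C}x_{e}$ — the odd-cycle inequality \Cref{lem:odd-cycle-ineq} for short cycles, the summed pairwise star bound from \Cref{thm:star-bound_l2} for long ones — is exactly what lets the argument handle arbitrarily long odd cycles uniformly; everything else is bookkeeping.
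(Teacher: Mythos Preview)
Your proof is correct and takes a somewhat cleaner route than the paper's. Both arguments put a suitable point into the matching polytope of $G_L$ (via \Cref{thm:edmonds-matching}) and compare against the optimum $F^*$; the difference is in \emph{which} point. The paper uses a pure scaling: it shows $\tfrac{9}{14}x$ lies in the polytope, and then, to lift $\tfrac{9}{14}$ to $\tfrac{3}{4}$, separately argues that every non-isolated edge $e\in L$ has $x_e<\tfrac{7}{9}$ (from the two-leaf star bound applied to the adjacent edge), so the per-edge ratio $(1+\tfrac{27}{14}x_e)/(1+3x_e)$ exceeds $\tfrac{3}{4}$; isolated edges are handled by noting they lie in $F^*$ automatically. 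You instead use the affine point $z_e=\tfrac{9x_e-1}{12}=\tfrac{3}{4}x_e-\tfrac{1}{12}$, which folds the target factor $\tfrac{3}{4}$ directly into the polytope-membership check and thereby eliminates both the per-edge ratio analysis and the isolated-edge case split. Your verification of the odd-cycle constraints---\Cref{lem:odd-cycle-ineq} for $m\le 5$, the summed two-leaf star bounds for $m\ge 7$---is essentially the paper's computation in different coordinates. The paper's route exposes the intermediate bound $x_e<\tfrac{7}{9}$, which may be informative elsewhere; yours is tighter, handles all edges uniformly, and dispenses with \Cref{lem:convex-combination-matchings} entirely.
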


\begin{proof} We begin by considering \Cref{eq:lem:L-bound} and
first showing that the variables $\beta x_e$ satisfy the inequalities of \Cref{thm:edmonds-matching} for $G_L$ with $\beta = \frac{9}{14}$. Then, \Cref{lem:convex-combination-matchings} gives us
\begin{equation}\label{eq:lem:bound-simplification}
    \frac{\sum_{e \in L} w_e(1+3y^*_e)}{\sum_{e \in L} w_e(1+3x_e)} \geq
    \frac{\sum_{e \in L} w_e(1+3\beta x_e)}{\sum_{e \in L} w_e(1+3x_e)},
\end{equation}
and we may focus our attention on bounding the latter, which only depends on the $x_e$.  

\textbf{Satisfying the inequalities of \Cref{thm:edmonds-matching}.} Inequality~\eqref{matching:nonneg-constraints} is satisfied since $e \in L$ implies $x_e \geq 0$.  To see that the vector $\frac{9}{14}x$ is feasible for the inequality \eqref{matching:deg-constraints}, first note that since we take $d=2$, $G_L$ has maximum degree at most 2 by \Cref{lem:degree-bound}.  Inequality~\eqref{matching:deg-constraints} is satisfied for vertices of degree 1 since SDP \Cref{def:lasserre_k} gives us $x_e \leq 1$ for all $e$. Now another application of the star bound (~\Cref{thm:star-bound_l2}) to a degree-2 vertex, $i$ in $G_L$ with neighbors $j$ and $k$, gives us that 
\begin{equation}\label{eq:prf:degree-bound}
(1 + 3x_{ij}) + (1 + 3x_{ik}) \leq 6 \ \Rightarrow\ 
 x_{ij} + x_{ik} \leq \frac{4}{3},
\end{equation}
hence $\frac{9}{14}x_{ij} + \frac{9}{14}x_{ik} \leq \frac{3}{4}x_{ij} + \frac{3}{4}x_{ik} \leq 1$.

Next we will show that Inequality~\eqref{matching:factor-critical-constraints} is satisfied.  For these inequalities, we may assume that the induced subgraph on $S \in \mathcal{F}$ in $G_L$, $G_L[S]$, is an odd cycle.  The set $\mathcal{F}$ contains only odd-sized sets that are ($2$-vertex) connected, and $G_L$ has degree at most 2; hence, $G_L[S]$ must be a path or a cycle.  The graph $G_L[S]$ cannot be a path since it must be factor critical, and removing a penultimate vertex in a path leaves a graph with no perfect matching.

Pick some $S \in \mathcal{F}$, and sum the inequalities of \eqref{matching:deg-constraints} over $i \in S$. This yields
\begin{equation*}
    \sum_{e \in \delta_L(S)} y_e + \sum_{e \in E_L(S)} 2y_e \leq |S| \ \Rightarrow\ 
    \sum_{e \in E_L(S)} y_e \leq \frac{|S|}{2},
\end{equation*}
since $y_e \geq 0$ by Inequality~\eqref{matching:nonneg-constraints}.  This shows that any vector $y$ that satisfies Inequality~\eqref{matching:deg-constraints} gives a RHS of $\frac{|S|}{2}$ instead of the desired value, $\frac{|S|-1}{2}$ for Inequality~\eqref{matching:factor-critical-constraints}.  To make such a vector feasible for Inequality~\eqref{matching:factor-critical-constraints}, we must scale it by $\max_{k \geq 1} \frac{2k}{2k+1} = \frac{2}{3}$.  In our case, $\frac{3}{4}x$ satisfies Inequality~\eqref{matching:deg-constraints}, hence $\frac{2}{3}\cdot\frac{3}{4}x = \frac{1}{2}x$ is feasible for the inequalities of \Cref{thm:edmonds-matching}.  However, we can do better by considering additional inequalities satisfied by the $x_e$ that are implied by our Lasserre$_2$ SDP relaxation. \Cref{lem:odd-cycle-ineq} gives us that $x_{ij} + x_{ik} + x_{jk} \leq 1$ for $ij,ik,jk \in E$, hence the inequalities of \eqref{matching:factor-critical-constraints} for $|S|=3$ are satisfied by the vector $x$ (in fact, since $x_e > \frac{5}{9}$ for all $e \in L$, $G_L$ contains no triangles).  

For any cycle $C \subseteq L$ on $5$ vertices, \Cref{lem:odd-cycle-ineq} yields $\sum_{e \in C} x_e \leq 3$, so that $\sum_{e \in C}\frac{3}{4}x_e \leq \frac{9}{4}$.  Hence, $\frac{3}{4}x$ must be scaled by an additional factor of $\frac{8}{9}$ in order 
satisfy the inequalities of \eqref{matching:factor-critical-constraints} for $|S|=5$. For $|S| \geq 7$, an additional factor of $\max_{k \geq 3} \frac{2k}{2k+1} = \frac{6}{7}$ suffices.  Thus $\frac{6}{7}\cdot\frac{3}{4}x = \frac{9}{14}x$ is a feasible solution for the inequalities of $\Cref{thm:edmonds-matching}$, and it is consequently a convex combination of incidence vectors of matchings in $G_L$.

\textbf{Establishing \Cref{eq:lem:L-bound}.} For the RHS of \Cref{eq:lem:bound-simplification}, we have
\begin{equation}\label{eq:prf:L-edge-bound}
\frac{\sum_{e \in L} w_e(1+3\beta x_e)}{\sum_{e \in L} w_e(1+3x_e)} \geq
\min_{\{e \in L \mid w_e > 0\}} \frac{w_e(1+3\beta x_e)}{w_e(1+3x_e)},
\end{equation}
since $1+3x_e > 0$ for all $e \in L$, and the LHS above is a convex combination of the ratios $(1+3\beta x_e)/(1+3x_e)$ for $e \in L$ with $w_e > 0$. This reduces our task to bounding $(1+3\beta x_e)/(1+3x_e)$, for a worst-case value of $x_e$ achieving the minimum above.  If $e$ is an isolated edge in $G_L$ (i.e.\ it is not incident to any other edges in $L$), then we may assume $e$ is in the maximum-weight matching $F^*$ without loss of generality, since we can apply the arguments of this section to each connected component of $G_L$.  For such an edge $e$, we may simply take $\beta = 1$, yielding $(1+3\beta x_e)/(1+3x_e) =1$.  If $e$ is not isolated in $G_L$, then it is incident to another edge $f$ at a vertex of degree 2 in $G_L$.  Since $x_f > \frac{5}{9}$, by \Cref{eq:prf:degree-bound} we see that $x_e \leq \frac{4}{3} - x_f < \frac{7}{9}$. We consequently have, for $\beta = \frac{9}{14}$:
\begin{equation*}
    \min_{\{e \in L \mid w_e > 0\}} \frac{w_e(1+3\beta x_e)}{w_e(1+3x_e)} \geq \min_{x_e \in (\frac{5}{9},\,\frac{7}{9})} \frac{1+3\beta x_e}{1+3x_e} = 
    \beta + \min_{x_e \in (\frac{5}{9},\,\frac{7}{9})} \frac{1-\beta}{1+3x_e} > \frac{3}{10} + \frac{7}{10}\beta = \frac{3}{4},
\end{equation*}
demonstrating \Cref{eq:lem:L-bound}.

\textbf{Establishing \Cref{eq:lem:S-bound}.} We now turn our attention to the edges in $S = E-L$.  Since $F^*$ includes no edges in $S$, we have $y^*_e = 0$ for $e \in S$. By the definition of $S$, $x_e \leq \frac{5}{9}$ for $e \in S$. These facts yield \Cref{eq:lem:S-bound}.
\end{proof}

\textbf{Finding a maximum-weight matching in $G_L$.} We note that since each vertex in $G_L$ has degree at most 2 when $d=2$, each connected component of $G_L$ is a path or cycle.  In this case a maximum-weight matching may be found in linear time by a dynamic programming algorithm.

\subsection{Bounding the Quality of the Product State Solution}

We have established performance bounds on the matching part of the rounding algorithm.  The only remaining piece is a performance bound on the product state solution produced by the rounding algorithm, $\rho_{PS}$.
 
\begin{lemma}\label{lem:prod_bound}Suppose $v_e$ for $e \in E$ are values derived from the optimal solution to Lasserre$_2$, and set $x_e := -v_e$. Let $L := \{e \in E \mid x_e > \frac{5}{9} = \alpha(2)\}$ and let $S:=E-L$.  Then, with respect to the weights $w_e \geq 0$, the approximation algorithm from \cite{G19} produces a random product state $\rho$ satisfying:
\begin{equation}\label{eq:ps_L_bound}
\frac{\sum_{ij \in L} w_{ij}\mathbb{E}[\Tr((\mathbb{I}-\X_i \X_j-\Y_i \Y_j-\Z_i \Z_j)\rho)]}{\sum_{ij \in L} w_{ij}(1+3x_{ij})} \geq 0.498766,
\end{equation}
and
\begin{equation}
\frac{\sum_{ij \in S} w_{ij}\mathbb{E}[\Tr((\mathbb{I}-\X_i \X_j-\Y_i \Y_j-\Z_i \Z_j)\rho)]}{\sum_{ij \in S} w_{ij}(1+3x_{ij})} \geq 0.557931
.
\end{equation}
\end{lemma}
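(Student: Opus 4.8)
The idea is to re-express the per-edge guarantee of the Gharibian--Parekh product-state algorithm~\cite{G19} in terms of the Lasserre$_2$ edge values $x_e$. First I would take the Lasserre$_1$ solution rounded in \Cref{step:prod-state} to be the principal submatrix of the Lasserre$_2$ solution $M$ indexed by $\mathcal{P}_n(1)$. This submatrix is feasible for Lasserre$_1$: every constraint of \Cref{def:lasserre_k} with $k=1$ involves only indices in $\mathcal{P}_n(1)$ and their pairwise products, so each is inherited from $M$, and positive semidefiniteness passes to principal submatrices. Moreover, since $M$ obeys \Cref{eq:composition_constraint} we have $M(\sigma_i,\sigma_j)=M(\sigma_i\sigma_j,\mathbb{I})$, so the Lasserre$_1$ objective contribution of an edge $ij$ for this solution is exactly $1+3x_{ij}$ — the quantity appearing in the denominators of the lemma. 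I use this particular Lasserre$_1$ solution (rather than, say, the Lasserre$_1$ optimum) precisely so that the per-edge objective contributions line up with the $x_e$ that define the split of $E$ into $L$ and $S$.

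Next I would invoke the rounding analysis of~\cite{G19}: for each edge $ij$ there is an explicit univariate function $h\colon[-\tfrac{1}{3},1]\to[0,1]$ such that the product state $\rho$ output by the algorithm satisfies $\mathbb{E}\big[\Tr\big((\mathbb{I}-X_iX_j-Y_iY_j-Z_iZ_j)\rho\big)\big]\ \ge\ h(x_{ij})\,(1+3x_{ij})$, where $h(t)$ is the worst case of~\cite{G19}'s single-edge rounding expression over all SDP-feasible edge data with edge value $t$ (in~\cite{G19} this worst case reduces to the single parameter $t$). Multiplying by $w_e\ge 0$, summing over $e\in L$, and dividing by $\sum_{e\in L}w_e(1+3x_e)$ — which is strictly positive since $1+3x_e>0$ for $e\in L$ by the comment after \Cref{lem:L2-edge-bound} — shows that the left-hand side of \Cref{eq:ps_L_bound} is at least $\frac{\sum_{e\in L}w_e h(x_e)(1+3x_e)}{\sum_{e\in L}w_e(1+3x_e)}$, a convex combination of the numbers $h(x_e)$ over edges $e\in L$ with $w_e>0$, hence at least $\min_{e\in L}h(x_e)\ge \min_{t\in(5/9,\,1]}h(t)$. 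The same manipulation for $S=E-L$ yields a lower bound of $\min_{t\in[-1/3,\,5/9]}h(t)$.

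It remains to evaluate these two interval minima, using $L=\{e: x_e>\tfrac{5}{9}\}$ and $-\tfrac{1}{3}\le x_e\le 1$: I would show $\min_{t\in(5/9,\,1]}h(t)=0.498766$ and $\min_{t\in[-1/3,\,5/9]}h(t)=0.557931$ by direct optimization of the explicit $h$ of~\cite{G19} (inspecting the endpoints and any interior critical point). The first value is essentially the global minimum of $h$, consistent with the $\approx 0.498$ factor of~\cite{G19}; this is exactly the statement that the worst-case edge for product-state rounding sits in the nearly-singlet regime $x_e>\tfrac{5}{9}$, so that on $S$ one avoids it and does strictly better. The main obstacle is this last step's first half: recalling the precise form of $h$ from~\cite{G19}'s Gaussian/Bloch-sphere rounding together with its reduction to a single parameter, and then rigorously locating the minimizers on the two subintervals so that the stated six-digit constants are valid lower bounds. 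The convexity bookkeeping in the preceding paragraphs is routine.
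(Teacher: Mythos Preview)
Your proposal is correct and matches the paper's proof essentially line for line: both restrict the Lasserre$_2$ moment matrix to $\mathcal{P}_n(1)$ to feed the \cite{G19} rounding, invoke the per-edge ratio (the paper writes it explicitly as $(1-F(v_{ij}))/(1-3v_{ij})$ with $F$ the hypergeometric function from \cite{G19}, which is your $h$), reduce to a worst-case edge via the convex-combination argument, and then minimize over $v_{ij}\in[-1,\tfrac13]$ for $L$ and over $v_{ij}\in[-\tfrac59,\tfrac13]$ for $S$, the latter minimum being attained at the endpoint $v_{ij}=-\tfrac59$.
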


\begin{proof}
Let $M$ be the optimal solution to Lasserre$_2(G,w)$ produced by \Cref{alg:rounding}. The product state approximation algorithm of \cite{G19} relies on a feasible solution to Lasserre$_1(G,w)$, which we may obtain from $M$.  In particular the \cite{G19} algorithm takes as input the vectors $u(\Phi)$, from \Cref{eq:u-M-relationship}, for $\Phi \in \mathcal{P}_n(1)$ and rounds them to a product state solution.    

Let $\Gamma$ and $\,_2 F _1 $ be the Gamma and Hyergeometric functions as they are normally defined \cite{A48}.  The analysis of the \cite{G19} algorithm considers a worst-case edge $ij \in E$ and depends on the value $v_{ij} = -x_{ij}$.  The worst-case approximation ratio is determined by the quantity 
\begin{equation}\label{eq:GP-approx-ratio}
\min_{v_{ij} \in [-1,\frac{1}{3}]} \frac{1-F(v_{ij})}{1-3v_{ij}},
\text{ where }
F(t)=\frac{2}{3}\left( \frac{\Gamma(2)}{\Gamma(3/2)}\right)^2 \,_2 F_1 \left[ \begin{matrix}1/2 & 1/2\\ 5/2\end{matrix}; t^2\right].
\end{equation}
For more details see the paragraph above Section 4.1 in \cite{G19}, where $t $ in that paper is equal to $v_{ij}$ in our terminology. The first inequality, \Cref{eq:ps_L_bound} is immediate because it is the worst case approximation factor for their algorithm.  

The worst-case value of $v_{ij}$ in \Cref{eq:GP-approx-ratio} is close to $-1$. We take advantage of the fact that $v_{ij} \geq -\alpha(2)$ for $ij \in S$, avoiding the worst case. In particular we get a ratio of:    
\begin{equation*}
\min_{v_{ij} \in [-\frac{5}{9}, \frac{1}{3}]} \frac{1-F(v_{ij})}{1-3v_{ij}}= \frac{3}{8}\left(1-F(5/9)\right)\geq 0.557931.
\end{equation*}

\end{proof}

\bibliography{mybib}{}
\bibliographystyle{plainurl}

\end{document}